\newcommand{\tikzmark}[1]{\tikz[overlay,remember picture] \node (#1) {};}
\newcommand*{\SpaceReservedForComments}{2.5cm}%
\newcommand*{\HorizontalOffset}{-0.5em}%
\newcommand*{\VerticalOffset}{0.7ex}%
\newcommand*{\AddNote}[4][]{%
    %% #1 = draw options
    %% #2 = top line number to start comment from
    %% #3 = bottom line number where comment ends
    %% #4 = text of comment
    \begin{tikzpicture}[overlay, remember picture]
        \draw [decoration={brace,mirror, amplitude=0.5em},decorate,thick, red, #1]
            ($(#3)+(\HorizontalOffset + 0.80*\linewidth,+\VerticalOffset)$) --  ($(#2)+(\HorizontalOffset + 0.80*\linewidth,\VerticalOffset)$)
            node [align=left, text width=\SpaceReservedForComments-1.0em, pos=0.5, anchor=west] {#4};
    \end{tikzpicture}
}%
    \algrenewcommand\alglinenumber[1]{\tikzmark{\arabic{ALG@line}}\tiny#1:}
\def\BibTeX{{\rm B\kern-.05em{\sc i\kern-.025em b}\kern-.08em
    T\kern-.1667em\lower.7ex\hbox{E}\kern-.125emX}}
\begin{document}

\newtheorem{definition}{Definition}
\newtheorem{example}{Example}
\newtheorem{theorem}{Theorem}
\newtheorem{problem}{Problem}
%\newtheorem{proposition}{Proposition}
% \newtheorem*{Proof}{Proof}

%\theoremstyle{nonumberplain}
%\newtheorem{proof}{Proof}

%\title{A Cluster-based Restreaming Algorithm for Scalable Web Graph Partitioning}
\title{Clustering-based Partitioning for Large Web Graphs}

\author{\IEEEauthorblockN{\IEEEauthorrefmark{1}Deyu Kong, \IEEEauthorrefmark{2}Xike Xie and \IEEEauthorrefmark{3}Zhuoxu Zhang}

\IEEEauthorblockA{\IEEEauthorrefmark{1}\IEEEauthorrefmark{2}\IEEEauthorrefmark{3}University of Science and Technology of China}
% \IEEEauthorblockA{University of Science and Technology of China}

% \{cavegf, zzx371479326\}@mail.ustc.edu.cn}

\IEEEauthorblockA{\{\IEEEauthorrefmark{1}cavegf,\IEEEauthorrefmark{3}zzx371479326\}@mail.ustc.edu.cn, \IEEEauthorrefmark{2}xkxie@ustc.edu.cn}}
% \IEEEauthorblockA{\IEEEauthorrefmark{3}xkxie@ustc.edu.cn}}

\maketitle

\begin{abstract}
    Graph partitioning plays a vital role in distributed large-scale web graph analytics, such as pagerank and label propagation. The quality and scalability of partitioning strategy have a strong impact on such communication- and computation-intensive applications, since it drives the communication cost and the workload balance among distributed computing nodes. 
    Recently, the streaming model shows promise in optimizing graph partitioning.  
    However, existing streaming partitioning strategies either lack of adequate quality or fall short in scaling with a large number of partitions.

    In this work, we explore the property of web graph clustering and propose a novel restreaming algorithm for vertex-cut partitioning. We investigate a series of techniques, which are pipelined as three steps, streaming clustering, cluster partitioning, and partition transformation.
    More, these techniques can be adapted to a parallel mechanism for further acceleration of partitioning.
    Experiments on real datasets and real systems show that our algorithm outperforms state-of-the-art vertex-cut partitioning methods in large-scale web graph processing. Surprisingly, the runtime cost of our method can be an order of magnitude lower than that of one-pass streaming partitioning algorithms, when the number of partitions is large.
\end{abstract} 

%This paper aims to address such challenges. 

\begin{IEEEkeywords}
Web Graphs, Streaming Partitioning
\end{IEEEkeywords}

\section{Introduction}
\label{sec:introduction}

The scale of graphs grows with an unprecedented rapid pace, including web graphs, social graphs, biological networks, and so on.
Big graphs are often measured in terabytes or petabytes, with billions or trillions of nodes and edges.
To cope with the big graph challenge, many distributed graph system are developed, such as Pregel\cite{malewicz2010pregel}, PowerGraph \cite{gonzalez2012powergraph}, GraphX~\cite{graphx}, GraphLab \cite{low2012distributed}, and PowerLyra \cite{chen2019powerlyra}.
In these systems, a big graph is partitioned into a predefined number of subgraphs, which are stored in distributed nodes.
Each node of the distributed graph system operates on its subgraph in parallel, and different nodes are communicated and synchronized with message-passing~\cite{li2021gpugraphx}.
%Therefore, the efficacy, efficiency, and scalability of graph partitioning algorithms are found to be imperative ingredients for bulk synchronous iterative processing.
Therefore, the quality, efficiency, and scalability of graph partitioning algorithms are found to be imperative ingredients for bulk synchronous iterative processing in distributed systems. Because it affects the workload balancing and communication overheads, and thus has a direct effect on on large-scale graph system performance.

There are two mainstream graph partitioning strategies, {\it edge-cut} \cite{tsourakakis2014fennel,andreev2006balanced,krauthgamer2009partitioning,karypis1996parallel,slota2017pulp,restream} and {\it vertex-cut} \cite{petroni2015hdrf,zhang2017graph,xie2014distributed,margo2015scalable} partitioning, both of which are to optimize objectives of load-balancing and min-cut (for either edges or vertices), so that the overall performance of distributed graph systems can be improved.
The vertex-cut partitioning strategy evenly assigns graph edges to distributed machines in order to minimize the number of times that vertices are cut.
Theoretically and empirically, vertex-cut partitioning is proved to be significantly more effective than its counterpart for web graph processing \cite{nature, gonzalez2012powergraph}, because most real graphs follow power law distributions \cite{donato2004large}.
%Edge-cut partitioning would create many replicas for high-degree vertices degrading the communication efficiency and system balancing.

%are to trade off the total communication cost between computing nodes and balanced workloads on each machine.
%In practice, vertex-cut partitioning shows its superior partitioning property on web graphs compared to edge-cut partitioning \cite{gonzalez2012powergraph}, since many real-word web graphs have power-law degree distribution \cite{donato2004large}.

%Optimally partitioning large graphs is impracticable due to its NP-hardness \cite{feige2008improved}. More over, the goal of graph partitioning is to trade off the total communication cost between computing nodes and balanced workloads on each machine.

Despite many works done, the problem of effective graph partitioning on practical distributed graph system is still open.

The problem of graph partitioning has been widely studied in the past decade.
For vertex-cut partitioning, there are two categories, a) \emph{offline distributed algorithms} that load the complete graph into memory \cite{zhang2017graph,margo2015scalable,karypis1996parallel}, and b) \emph{online streaming algorithms} that ingest edges as streams and perform on-the-fly partitioning based on partial knowledge of the graph \cite{gonzalez2012powergraph,petroni2015hdrf,patwary2019window,hua2019quasi,xie2014distributed}.
Offline algorithms do not scale well for distributed graph systems, with the tremendous increase of data volumes. For example, METIS \cite{karypis1996parallel} requires more than $8.5$ hours to partition a graph with about $1.5$ billion edges to only $2$ partitions \cite{tsourakakis2014fennel}.
Online streaming algorithms consist of hashing-based methods (e.g. DBH \cite{xie2014distributed}, Hashing \cite{gonzalez2012powergraph}) and heuristic-based methods (e.g. Greedy \cite{gonzalez2012powergraph}, HDRF \cite{petroni2015hdrf}). The characteristics of vertex-cut streaming algorithms are summarized in Table~\ref{tab:characteristics}.
\vspace{-7pt}
\begin{table}[h]
\centering
    \caption{\centering \hspace{-2pt} Vertex-cut \hspace{-1pt} Streaming \hspace{-1pt}Partitioning\hspace{-1pt} Algorithms}
    \label{tab:characteristics}
    \begin{tabular}{c|c|c}
    % \toprule
    \hline
        \multirow{1}{*}{\textbf{Algorithm}} & \multirow{1}{*}{\textbf{Time Cost}} & \multirow{1}{*}{\textbf{Quality}} \\
        \hline Hashing \cite{gonzalez2012powergraph} &
        Low & Low \\
        DBH \cite{xie2014distributed} & Low & Low \\
        Mint \cite{hua2019quasi} & Medium & Medium \\
        Greedy \cite{gonzalez2012powergraph} & High & High \\
        HDRF \cite{petroni2015hdrf} & High & High \\
        CLUGP & Low & High \\
    % \bottomrule
    \hline
    \end{tabular}
\end{table}
% From Table~\ref{tab:characteristics}, it can be seen that heuristic-based methods achieve better partitioning quality than hashing-based methods, and perform better in bulk synchronous processing systems~\cite{abbas2018streaming}. {\it However, heuristic-based methods are hard to be parallelized and fall short in scaling with a larger number of partitions, because global storage of previous assignments has to be shared among workers \cite{pacaci2019experimental}. Hashing-based algorithms and Mint perform well in scalability but are inferior on partition quality.}

From Table~\ref{tab:characteristics}, it can be seen that heuristic-based methods achieve better partitioning quality than hashing-based methods, and perform better in bulk synchronous processing systems~\cite{abbas2018streaming}. However, heuristic-based methods are time-consuming, because a global status table needs to be locked each time a partition decision of an edge is made. Hashing-based methods and Mint perform faster than heuristic-based methods but are inferior in partition quality.

% Two shortcomings for heuristic-based methods, lacking of structural information, global state limitations in scalability.

%Two shortcomings for heuristic-based methods, poor performance on large number of partitions, computation bottleneck caused by global status table.

% the maintenance of global data structure incurs considerable computation and communication overheads.

To this end, we study the problem of vertex-cut partitioning for large-scale web graphs to propose a new versatile partitioning architecture. We tackle the performance and quality challenge by exploring the connections between graph clustering and partitioning \cite{girvan2002community, reichardt2007partitioning, agarwal2008modularity, yang2017hypergraph}. Our vision is to explore clustering for enhancing the partitioning quality, employ streaming techniques for improving the efficiency, and break the ties of global structures for boosting system performance.

Nevertheless, a series of technical challenges arise in confronting clustering-based vertex-cut partitioning.
First, existing streaming clustering techniques only work for edge-cut partitioning, so that a high-degree vertex can hardly be accurately identified with partial degree information. Once such vertices are falsely identified for cutting, many replicas would be generated deteriorating system balance and communication efficiency. More, it is infeasible for correcting the false cutting with low-cost subsequent compensation, since it takes much communication overhead for high-degree vertex retrieving and reshuffling.
Second, existing partitioning methods (e.g., HDRF \cite{petroni2015hdrf}) are highly dependent on the global structure of vertex degrees or partial degrees, hindering its extensibility to large-scale graph streaming scenarios.
The corresponding maintenance overhead becomes no more negligible, and even dominates the total time of graph application (e.g., pagerank) running on large partitions.

%In this paper, we present a cluster based offline streaming algorithm (CLUSP) for making partitioning decisions based on clusters not only on edges. CLUSP consists of three stps. First, we will apply the streaming graph cluster algorithm to generate fine-grained clusters. Second, we propose a game theory approach to assign these clusters into different partitions. At last, we will raise a heuristic method to convert the cluster partitioning result into edge partitioning result. Our main contributions are summarized as follows:

In our work, we present a \underline{CLU}stering-based restreaming \underline{G}raph \underline{P}artitioning (CLUGP) architecture for vertex-cut partitioning over large-scale web graphs. Our algorithm follows a novel three-pass restreaming framework, which is pipelined as three steps, streaming clustering, cluster partitioning, and partition transformation.
The streaming clustering step exploits the connection between clustering and vertex-cut partitioning for generating fine-grained clusters and reducing vertex replicas.
The cluster partitioning step applies game theories for mapping generated clusters into specific partitions and further refines clustered results.
Then, the partition transformation step transforms the cluster-based partitioning results into vertex-cut partitioning results.

%proposes a heuristic method with guaranteed error bounds for transforming the cluster-based partitioning results into vertex-cut partitioning results.

Our contributions can be listed as follows.

\begin{itemize}
    \item We propose a novel streaming partitioning architecture, which outperforms state-of-the-art solutions in terms of quality and scalability, for big web graph analytics.
        %The time and space complexities of our algorithm, CLUGP, are bounded by $O(3|E|)$ and $O(|V|)$, respectively. %better partition quality and scalability among all stream-based algorithms on web graph. The partition latency of our algotihm is lower than any other low-cut algorithms on large parittion numbers (larger than 16).
    \item We study a new streaming clustering algorithm optimized for vertex-cut partitioning, by extending previous edge-cut streaming clustering algorithms.
    \item We provide a new method for mapping generated clusters to vertex-cut partitions by modeling the process by game theories. We theoretically prove the existence of Nash equilibrium and quality guarantees.
    \item We set up the parallel mechanism for CLUGP, getting rid of
    the computation bottleneck caused by frequent global table accessing by heuristic-based streaming algorithms.
    %that does not need any distributed communication, we also anaylsis why distributed computing nodes can word independently.
    %\item We prove that the time and space complexities of our algorithm, CLUGP, are bounded by $O(3|E|)$ and $O(|V|)$, respectively.
    \item We empirically evaluate CLUGP with real datasets and real distributed graph systems. The results over representative algorithms, such as pagerank and connected component, demonstrate the superiority of our proposals.
    %communication and computation cost for representative graph analytical algorithms, such as pagerank, label propagation and sssp,  on a cluster within $32$ machines integrated with PowerGraph \cite{gonzalez2012powergraph} using CLUSP. The experiments show that CLUSP has lowest communication and computation cost.
\end{itemize}

The rest of the paper is organized as follows.
We first formalize the vertex-cut partitioning problem in Section~\ref{sec:problem}. Then, we propose the CLUGP framework in Section~\ref{sec:framework},
investigate technical details of streaming clustering in Section~\ref{sec:cluster}, and study the partitioning game in Section~\ref{sec:clusp algorithm}. We conduct extensive experiments with real datasets and real systems in Section~\ref{sec:exp}.  We conclude the paper in Section~\ref{sec:con}. Notations of this paper are summarized in Table~\ref{tab:notations}.

\begin{table}
%\small
    \caption{Notations}
    \label{tab:notations}
    \begin{tabular}{ll}
        \hline
        Symbol & Notation \\
        \hline
        $G=(V,E)$ & Directed graph with set of vertices $V$ and edges $E$. \\
        $P$ & The set of $k$ partitions $P = \{p_1,\cdots,p_k\}$. \\
        $P(v)$ & The set of partitions that hold vertex $v$ . \\
        $|p_i|$ & The number of edges within $p_i$. \\
        $G_{S}$ & Edge streaming of the graph $G$. \\
        $G_C$ & The cluster set of graph $G$, $G_C = \{c_1, \cdots, c_m\}$. \\
        $|c_i|$ & The number of intra-cluster edges of $c_i$, $|c_i| = |e(c_i, c_i)|$. \\
        $m$ & The number of clusters, i.e., $|G_C| = m$. \\
        $\varphi(a_i)$ & The individual cost function of $c_i$ under strategy $a_i$.\\
        % $a^*$ & The pure strategy set of a startegic game. \\
        % $a_{-i}$ & The strategies of all players except player $i$. \\
        % $a_i$ & The strategy of player $i$. \\
        $\Phi$ & The potential function of a strategic game. \\
        $\lambda$ & Normalization factor. \\
        $\tau$ & The imbalance factor. \\
        $e(c_i, c_j)$ & The set of edges that across from  cluster $c_i$ to $c_j$. \\
        $e(c_i, V\backslash c_i)$ & The set of edges that across from cluster $c_i$ to other clusters. \\
        \hline
    \end{tabular}
\end{table} 

\section{Preliminaries}
\label{sec:problem}

\subsection{Vertex-Cut Streaming Partitioning}
Given a directed graph $G=(V,E)$, where $V$ is a finite set of vertices, and $E$ is a set of edges.
\begin{definition}[{\bf Edge Streaming Graph Model}]
    The edge streaming graph model $G_S=\{e_1, e_2, \cdots, e_{|E|}\}$ assumes edges of an input graph $G=(V,E)$ arrive sequentially\footnote{Without losing generality, we assume the edge stream of $G$ arrives in the breadth-first (BFS) order, following the setting of \cite{BCSU3, zhu2016gemini,hua2019quasi}, since most real web graphs are formulated and crawled in BFS order.
}, where each edge $e_i=(u, v)$ indicates a directed edge form vertex $u$ to vertex $v$. %with weight $w$.
\end{definition}

In vertex-cut streaming partitioning, partitioning algorithms perform single- or multi-pass over the graph stream and make partitioning decisions for computational load-balancing and communication minimization.

\begin{problem}[\bf Vertex-Cut Streaming Partitioning]
    Given $k$ partitions $\{p_i\}_{1\leq i \leq k}$, the vertex-cut streaming partitioning algorithm assigns each edge $e_i \in G_S$ to a partition $p_i$, such that $\cup_{1 \leq i \leq k} p_i = E$ and $p_i \cap p_j = \emptyset$ ($i \neq j$). Each partition corresponds to a distributed node, each distributed node uses the divided graph edges to perform distributed graph analytic tasks.
\end{problem}

% The vertices are not unique across partitions and are therefore replicated to several partitions. Therefore, the {\it replication factor} is considered as a general metric for measuring the partitioning quality.

% so that the computation workload is balanced and the communication overhead is minimized.

% \begin{definition}[{\bf Edge Streaming Graph Model}]
%     The edge streaming graph model $G_S=\{e_1, e_2, \cdots, e_{|E|}\}$ assumes edges of an input graph $G=(V,E)$ arrive sequentially\footnote{Without losing generality, we assume the edge stream of $G$ arrives in the breadth-first (BFS) order, following the setting of \cite{BCSU3, zhu2016gemini,hua2019quasi}, since most real web graphs are formulated and crawled in BFS order.
% }, where each edge $e_i=(u, v)$ indicates a directed edge form vertex $u$ to vertex $v$. %with weight $w$.
% \end{definition}

{
    % \color{blue}

    \subsection{Partition Quality}
    \label{subsec:partition-quality}

    %We next discuss how to evaluate the partition quality of the vertex-cut partition algorithm. Actually,
    The main goal of partitioning algorithm is to improve the performance of the upper-level distributed graph processing system, like PowerGraph~\cite{xie2014distributed}. Considering the GAS model of the vertex-centric graph processing system, the graph computing messages are aggregated at the vertices and spread along the outgoing edges. After each iteration step, the master vertex gathers the message sent by mirror vertices, and synchronizes it to mirror vertices. Therefore, the number of edges determines the number of messages, and the number of mirror vertices determines the number of synchronizations, within an iteration.

    To accelerate distributed graph processing, one should, 1) balance the computing time of each distributed node (computing cost); 2) reduce the number of synchronizations (communication cost). For the load balance part, we use the relative load balance $\tau \geq \frac{k \cdot \max|p_i|}{|E|}$ to denote the imbalance among partitions, where $|p_i|$ denotes the number of edges in partition $p_i$. $\tau \geq 1$ is a threshold for imbalance. For the synchronizations part, we use the replication factor $\frac{1}{|V|}\sum_{v\in V}|P(v)|$ to denote the proportion of mirror vertices, where $P(v)$ is the set of partitions holding vertex $v$, and $|P(v)|$ refers to the number of partitions holding $v$.

    %multi-objective optimization problem. Following the same partitioning goal used in
    The vertex-cut partitioning can thus be modelled as an optimization problem \cite{gonzalez2012powergraph,petroni2015hdrf}, as follows.
    \begin{equation}
        \label{eq:metric}
        %\footnotesize
        \begin{split}
            & minimize \frac{1}{|V|}\sum_{v\in V}|P(v)| \quad s.t. \frac{k \cdot \max|p_i|}{|E|} \leq \tau \\
        \end{split}
    \end{equation}
    By minimizing the replication factor, the communication cost during graph computation is also minimized. By balancing the workload balance, the computing task of each computing node can be balanced.
}
% The purpose of vertex-cut streaming partitioning is to minimize the replicate factor and the relative load balance. We can follow the same partitioning goal used in \cite{gonzalez2012powergraph,petroni2015hdrf} as:
% \begin{equation}
%     \label{eq:metric}
%     \footnotesize
%     \begin{split}
%         & minimize \frac{1}{|V|}\sum_{v\in V}|P(v)| \quad s.t. \frac{k \cdot \max|p_i|}{|E|} \leq \tau \\
%     \end{split}
% \end{equation}
% where $P(v)$ is the set of partitions holding vertex $v$, $|P(v)|$ refers to the number of partitions holding $v$, and
% %is the number of partitions that hold vertex $v$ ,
% $|p_i|$ denotes the number of edges in partition $p_i$.
% $\tau \geq 1$ is a threshold for imbalance. By minimizing the replication factor, the communication cost during graph computation is also minimized. By balancing the workload balance, the computing task of each computing node can be balanced.
%
%\subsection{Modularity}

%Modularity, a quality metric for graph clusters, is commonly used to measure the strength of division of a network into clusters or the so-called communities. Graph or network with a high modularity has closely connections between the vertices within clusters but sparse connections between vertices of different clusters. So when it comes to community detection or cluster in network, we usually choose modularity as a optimization method. As a consequence, modularity is an significant graph parameter from a practical point of view \cite{prokhorenkova2016modularity}.

\subsection{Power-law Degree Distribution of Web Graphs}
\label{subsec:Web graph}
%In this subsection, we give a brief introduction about the property of web graph.
% {\color{teal}\st{
% We discuss two properties of web graphs, power-law distribution of vertex degrees, and modularity.
% The former determines partitioning strategy, and the latter paves the road to quality partitioning.}}

% {\color{teal} \st{\it Distribution.} }
According to Kumaret et al. \cite{kumar1999trawling, kumar1999extracting} and Kleinberg et al.\cite{kleinberg1999web}, the degree distribution of web graphs follows \textit{power law approximately}.
That is, given a specific degree $x$, the number of vertices follows power-law distribution, $f(x)\propto x^{-\alpha}$, where $\alpha$ is a constant and $\alpha>0$.
The fact that web graphs are featured with power-law distributions are commonly accepted \cite{albert1999diameter,barabasi1999emergence,barabasi2000scale}. Unfortunately, traditional balanced edge-cut partitioning performs poorly on power-law graphs~\cite{abou2006multilevel,lang2004finding}. Percolation theory~\cite{albert2000error} proves that power-law graphs have good vertex-cuts. Therefore, we study the vertex-cut partitioning strategy for web graphs.

\section{Architecture}
\label{sec:framework}

The CLUGP architecture consists of three steps, which process streamed graph edges in three passes, as shown in Figure~\ref{fig:algorithm}.
First, we improve the method of vertex stream clustering proposed by Hollocou et al. \cite{hollocou2017streaming} to produce fine-grained clusters (\emph{streaming clustering step}, Section~\ref{subsec:1pass}).
Second, we investigate game theories to assign clusters to a set of partitions, such that the number of edges across partitions is minimized and the storage of partitions is balanced (\emph{cluster partitioning step}, Section~\ref{subsec:2pass}). Last, we propose a heuristic method to transform cluster partitions into edge partitions (\emph{partitioning transformation step}, Section~\ref{subsec:transform}).

\subsection{First Pass: Streaming Clustering}
\label{subsec:1pass}
The first step is to exploit the connections between clustering and partitioning, so that graph structural information can be leveraged to supervise partitioning, laying the foundation for subsequent steps.

%The main goal of this step is how to use the web graph structure information to guide the subsequent partitioning algorithm and then better satisfy two heuristic rules.

\begin{problem}[\textbf{Streaming Clustering}]
Suppose a streaming graph $G_S=\{e_1, e_2, \cdots, e_{|E|}\}$ and the maximum cluster volume $V_{max}$. The problem is to assign each vertex $v$ to one of the $m$ clusters $\{c_i\}_{1 \leq i \leq m}$, such that the edge-cutting is minimized. Notice that conditions $\cup_{1\leq i \leq m} c_i = V$ and $|c_i| \leq V_{max}$ should be met. The output is a table mapping a vertex to a cluster, i.e., $\{\langle v_i, c_j\rangle\}$.
\end{problem}

The graph clustering can potentially be used for exploiting the structural information of web graphs.
However, there is no existing solution for streaming edge clustering which can be directly used for vertex-cut partitioning.
In our work, we improve the vertex streaming clustering algorithm~\cite{hollocou2017streaming} for adapting to vertex-cut graph partitioning. %, since the problems of vertex and edge clustering have many things in common that can be bridged up.
The challenge is that clustering and partitioning are with different optimization targets.
The goal of vertex clustering is to minimize edge-cutting, while vertex-cut partitioning is on minimizing vertex replicas.
We use an example to show the difference of the two optimization targets in Figure~\ref{fig:algorithm}.

% We find that the modularity-based graph vertex clutstering can be a good guide for vertex-cut partitioning. The advantages of modulrity-based streaming graph clustering are shown as follows. First, as mentioned in Section~\ref{subsec:Web graph}, web graph has the characteristics of high modularity score, the web graph structure information can be detected with high quality by modularity-based graph clustering. Second, during the process of streaming algorithm, the cluster to which vertex belongs can be easily adjusted since the vertex will appear multiple times in the edge stream. Therefore, we can refine the previous inappropiate clustering due to the incomplete information easily. However, the edge will only appear once in the edge stream. So, it is inconvenient to refine the postition of the edge which has already been clustered, unless we use the buffer/slide window to record the previous edge \cite{mayer2018adwise, zhang2017graph, patwary2019window}. Actually, these methods are time consuming and memory overhead, and cannot handle very large-sacle graphs.

%{\color{blue}
%We next show how to bridge the gap between vertex clustering and edge clustering.

\begin{figure*}[h]
    \centering
    \includegraphics[width=0.9\linewidth]{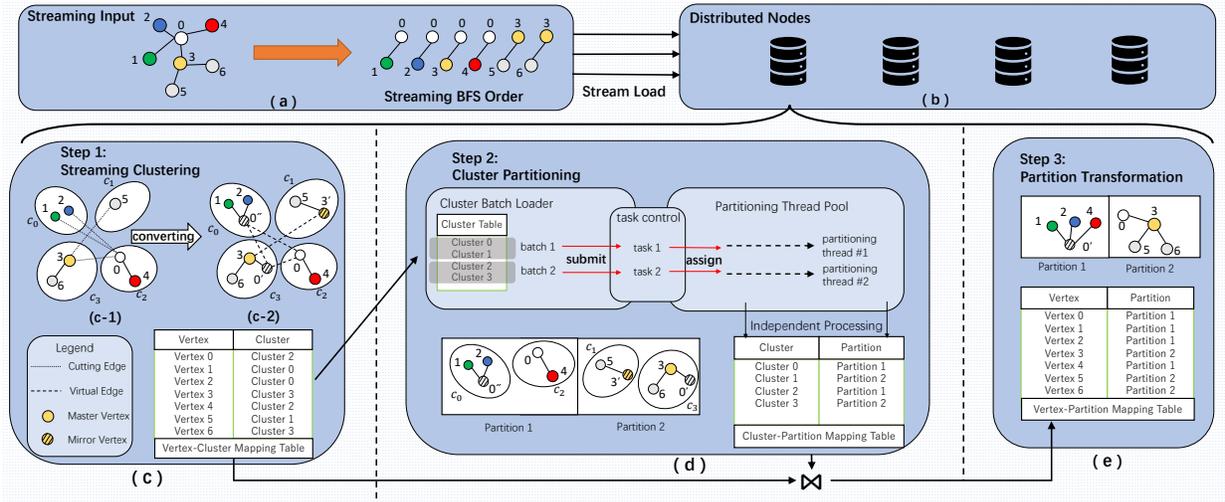}
    \caption{CLUGP Architecture}
    \label{fig:algorithm}
\end{figure*}

For vertex clustering, as shown in Figure~\ref{fig:algorithm} (c-1), vertices $v_0$ to $v_6$ are uniquely assigned to clusters $c_0$ to $c_3$, thus there exist cutting edges, but not vertex replicas. For example, $e(3, 5)$ is a cutting edge generated by the vertex clustering algorithm, while $v_3$ and $v_5$ belong to clusters $c_3$ and $c_1$, respectively. %There are $4$ cutting edges, $e(0,1)$, $e(0,2)$, $e(0,3)$, and $e(3,5)$.

For edge partitioning, as shown in Figure~\ref{fig:algorithm} (c-2), two replicas of vertex $v_0$ and one replica of $v_3$ are generated, as highlighted by dashed circle. The existence of vertex replicas eliminates edge-cutting for ``real'' edges. The dashed lines represent virtual edges, connecting master and mirror vertices. For example, $v_0$ is the master vertex, and $v_0'$ and $v_0''$ are mirror vertices and replicas of $v_0$. Hence, $e(v_0, v_0')$ is a virtual edge\footnote{Without causing any ambiguities, we also call virtual edges as cutting edges for vertex-cut partitioning in the rest of the paper.}.

We therefore propose a new vertex clustering framework, tailored for minimizing vertex replicas.
It produces a coarse-grained vertex-cut partitioning result, in the form of vertex-cluster pairs.
Details are shown in Section~\ref{sec:cluster}.

\subsection{Second Pass: Cluster Partitioning}
\label{subsec:2pass}

The second step is to assign the generated clusters to the given set of partitions, which can be formalized as follows.

\begin{problem}[\textbf{Cluster Partitioning}]
    Suppose $m$ clusters $G_C = \{c_i\}_{1 \leq i \leq m}$ and $k$ partitions $P = \{p_i\}_{1 \leq i \leq k}$.
    %Let $e(c_i, c_i)$ be the set of edges with both end vertices in the cluster $c_i$, and let $e(c_i, V\backslash c_i)$ be the set of replica edges whose source vertex is in the cluster $c_i$ and the other is not. $|c_i| = |e(c_i, c_i)|$ denotes the number of intra-cluster edges of $c_i$
    The problem is to assign each cluster $c_i \in G_C$ to a partition $p_i \in P$, while minimizing edge-cutting and imbalance. The output is a table mapping a cluster to a partition, $\{\langle c_i, p_j\rangle\}$.
\end{problem}

The optimization target of cluster partitioning problem has two parts, load balancing and edge-cutting minimization.
For the load balancing part, without losing generality, we use $\lambda\frac{1}{k}\sum_{p_i \in P}|p_i|^2$ to denote imbalance cost of $k$ partitions \cite{moons2013game,vocking2007selfish}.
Parameter $\lambda$ is for normalization.
It is obvious that the lowest imbalance is achieved, when partitions are of the same size.
For the edge-cutting part, we can use the number of inter-partition (virtual) edges as the cost. By integrating the two costs, we can get the overall cluster partitioning cost function.

\begin{definition}[Cluster Partitioning Cost]
    The overall cluster partitioning cost is defined as:
     \begin{equation}
        \small
        \label{eq:cluster-partition-objfunc}
        Cost =
        \underbrace{\lambda\frac{1}{k}\sum_{p_i \in P}|p_i|^2}_{load ~balancing}
        + \underbrace{\sum_{p_i \in P}|e(p_i, V\backslash p_i)|}_{edge-cutting}
    \end{equation}
    where
    %$\sum_{p_i \in P}|p_i|^2$ is used for load balance and
    $\sum_{p_i \in P}|e(p_i, V\backslash p_i)|$ is the number of cutting edges.
\end{definition}

It can be shown that finding the global optimal solution targeted on Equation~\ref{eq:cluster-partition-objfunc} is NP-hard, by reducing it from the set cover problem. To get a sub-optimal solution, we treat each cluster as a player. Then, the cluster partitioning problem can be modelled as a strategic game.
For a cluster, the selection of a partition can thus be regarded as a rational game strategy, where each cluster affects others’ costs
and meanwhile minimizes its own cost by strategically manipulating its partition choice.
Thus, the optimization problem is transformed into finding the Nash equilibrium of the game, so that each player/cluster minimizes its own cost.
%After the modeling, finding the local optimal solution can be regarded as finding Nash equilibrium that each player gets its minimize individual cost.

However, the retrieval of the Nash equilibrium is compute-bound, a.k.a., the overhead of computation dominates that of I/O.
%the computation time of finding the Nash equilibrium is larger \textbf{than I/O time}, thus will seriously affect the performance of the algorithm.
So, we design a parallel strategy to accelerate the cluster partitioning process.
As shown in Figure~\ref{fig:algorithm}(d), clusters generated are grouped into batches, where each batch is executed by an independent thread to find the Nash equilibrium.
%The feasibility of parallelism is based on the fact that the structurally adjacent clusters are stored next to each other ($c_0$ and $c_1$ in Figure~\ref{fig:algorithm}), since the streaming clustering performed in breadth-first streaming graph order. Furthermore, we can regard that the clusters in each batch are mostly disjoint ($c_0$ and $c_3$ in Figure~\ref{fig:algorithm}), so we can partition these batches in parallel. This way, the time cost of this step can also be bounded by I/O.

%For ease of presentation, we proceed to the discussion of the third step, partition transformation.
More technical details and analysis of cluster partitioning problem are covered by Section~\ref{sec:clusp algorithm}.

\subsection{Third Pass: Partitioning Transformation}
\label{subsec:transform}

%We now present the partition transformation strategy.
%How to deal with the inter-partition edges is the key issue in this step.

By joining the outputs of the first two steps, we can map a vertex to a partition. For mapping an edge to a partition, we utilize partitioning transformation as the third step of CLUGP. It accesses edge streams for further refining the cluster-based partitioning result of the second step.

%To meet the two rules of edge partitioning in Section~\ref{sec:framework}, the third step, partition transformation, fine-tunes the partitioning results of the second step. The details are covered in Algorithm~\ref{alg:transform} in the Appendix.

%cluster partitioning,
%strategy for inter-partition and intra-partition edges are shown in Algorithm~\ref{alg:transform}.

%{
%    \color{blue}

\begin{problem}[\textbf{Partitioning Transformation}]
    Given the mapping table from vertices to partitions, $\{\langle v_i, p_j\rangle\} = \{\langle v_i, c_j\rangle\}\bowtie \{\langle c_i, p_j\rangle\}$, the problem is to transform vertex mapping table $\{\langle v_i, p_j\rangle\}$, to edge mapping table $\{\langle e_i, p_j\rangle\}$, which serves as the partitioning result.
%    a streaming graph $G_S=\{e_1, e_2, \cdots, e_{|E|}\}$,  and the cluster partitioning results of $G_C$.
%    The problem is to assign each edge $e \in G_S$ to a partition $p \in P$ based on the result of cluster partitioning.
%    a set of $k$ partitions $P = \{p_1, \cdots, p_k\}$
\end{problem}

For each edge $e(u, v) \in G_S$, partitions $P(u)$ and $P(v)$ are accessed to determine which partition $e$ is assigned to. The two partitions are retrieved based on the joining results of the first two steps. Notice that we do not explicitly maintain the joining results for reducing memory cost.
Instead, one can quickly map a vertex to a partition by querying the two mapping tables sequentially.
The determination of $e$ assignment has been addressed in previous two steps, following the optimization target of edge-cutting and imbalance. The de facto assignment of edges is implemented in the third step, by traversing the streaming graph. The details are covered in Algorithm~\ref{alg:transform}.

\begin{algorithm}[h]
    \footnotesize
    \caption{Partition Transformation}
    \label{alg:transform}
    \hspace*{\algorithmicindent} \textbf{Input} Cluster Partition Strategy $a^*$, Cluster Set $clu[]$, Vertex \\ \hspace*{\algorithmicindent} Degree $deg[]$, Load Balance Factor $\tau$ \\
    \hspace*{\algorithmicindent} \textbf{Output} Partition Result
    \begin{algorithmic}[1]
        \State Let $a_i$ be the partition choice of cluster $c_i$;
        \State Initialize the array of load, $L_{max}=\tau\frac{|E|}{k}$;
        \For{$e(u, v) \in G_S$}
            \State $c_u, c_v \gets clu[u]$, $clu[v]$;
            \State $p_u, p_v \gets a_u, a_v$;
            \If{$|p_u| \geq L_{max}$ \textbf{or} $|p_v| \geq L_{max}$}
                \If{$|p_u| < L_{max}$}
                    \State assign $e$ to $p_u$;
                    \State \textbf{continue};
                \EndIf
                \If{$|p_v| < L_{max}$}
                    \State assign $e$ to $p_v$;
                    \State \textbf{continue};
                \EndIf
                \For{$p_i \in P$}
                    \State assign $e$ to $p_i$ if $|p_i| < L_{max}$;  \hfill $\rhd$ Load Balance
                \EndFor
            \ElsIf{$p_u$ \textbf{equal} $p_v$}
                \State assign $e$ to $p_u$\;
            \Else
                \If{either $u$ or $v$ has mirror vertices}
                    \State assign $e$ to $p_{v}$ or $p_{u}$;
                \Else
                    \State assign $e$ to $p_{u}$ if $deg[v] > deg[u]$;
                    \State assign $e$ to $p_{v}$ if $deg[u] > deg[v]$;
                    \hfill $\rhd$ Reduce Replicas
                \EndIf
            \EndIf
        \EndFor
    \end{algorithmic}
\end{algorithm}

For each edge $e(u, v)$, if neither of $P(u)$ and $P(v)$ can accommodate $e$, then $e$ will be assigned to an underflow partition, for workload balancing (lines 6-14). When $u$ and $v$ are in the same partition, $e$ will be assigned to the partition (lines 15-16). If $u(v)$ has mirror vertices, which means the vertex $u(v)$ has been replicate during step 1 (Section~\ref{sec:cluster}), $e$ will be assigned to the partitions where $u(v)'s$ mirror vertex belongs to (lines 18-19).
Otherwise, the vertex with a higher degree will be cut (lines 21-22) to reduce vertex replicas, similar to~\cite{petroni2015hdrf,mayer2018adwise,patwary2019window}.

During the transformation, there is a user-specified parameter, i.e., imbalance factor $\tau$, on controlling the partition size. Compared to $\tau$, $V_{max}$ of the first step is merely the upper limit of cluster capacities. The purpose of $\tau$ is to further improve partition balancing from coarse-grained cluster-level to fine-grained partition-level. This way, edges that incur partitioning overflowing are moved to underflow partitions, strictly conforming to the system parameter $\tau$.

For the third step, CLUGP traverses the edge stream to perform partition transformation that merely takes $O(1)$ space cost, since we only need a $k$ elements array to store the partition size.
To perform transformation, the query over vertex-to-partition mapping tables only takes $O(1)$ time for each edge. The total time complexity is $O(|E|)$.

This way, our architecture can be well parallelized.
Of the system, each distributed node accesses partial streaming edges and performs the three steps, clustering, game processing, and transformation, locally.
Further, game processing of a distributed node can be parallelized by multi-threading.
%After the three steps, each node stores its partitioning result to the file system or sends to the corresponds distributed nodes concurrently.
After the three steps, the final graph partitioning result is obtained by combining the partial partitioning results of distributed nodes.
%In this way, at the expense of slight partitioning quality, all distributed nodes can run in parallel without inter-node communication.
%} 

\section{Streaming Clustering}
\label{sec:cluster}

In this section, we investigate a new streaming clustering algorithm. In particular, we propose the allocation-splitting-migration framework in Section~\ref{subsec:cluframework}, and conduct theoretical analysis in Section~\ref{subsec:clusteranalysis}.

\subsection{Allocation-splitting-migration Framework}
\label{subsec:cluframework}

% {\color{teal}\st{
% Modularity-based strategies} \cite{girvan2002community, noack2009multi, shiokawa2013fast,hollocou2017streaming} \st{are widely used for web graph clustering.}}
% and community detection, showing good potentials for being applied for web graph partitioning.
The first and only streaming version of graph clustering algorithm, \emph{Holl}, is proposed by Hollocou et al. \cite{hollocou2017streaming}.

{
    % \color{blue}
    Holl presented an \emph{allocation-migration} framework for streaming clustering. However, Holl cannot be directly applied for graph partitioning, because the {\it allocation-migration} framework of Holl incurs high replication factors. CLUGP improves Holl by adding a splitting operation, and thus construct a new \emph{allocation-splitting-migration}. We will prove that the splitting operation can decrease the replication factor.

\begin{figure}[h]
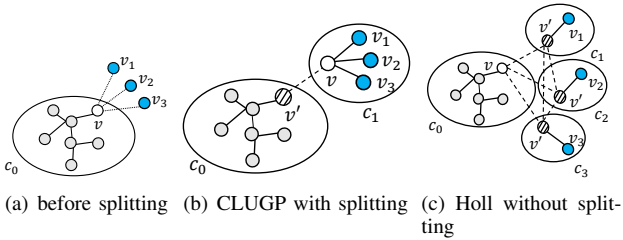

    \centering
    \subfigure[before splitting]
    {\includegraphics[width=0.25\linewidth]{figure/before-cluster.pdf}}
    \subfigure[CLUGP with splitting]
    {\includegraphics[width=0.35\linewidth]{figure/after-cluster-clugp.pdf}}
    \subfigure[Holl without splitting]
    {\includegraphics[width=0.3\linewidth]{figure/after-cluster-2017.pdf}}
    \vspace{-5pt}
    \caption{An Example of Cluster Splitting for Streaming Clustering {\small (Vertex IDs follow BFS order, e.g., $G_S = \{e(v,v_1),e(v,v_2),e(v,v_3),\cdots\}$}}
    \label{fig:cluster-example}
\end{figure}
%\vspace{-5pt}
Consider the example in Figure~\ref{fig:cluster-example}.
Suppose cluster $c_0$ reaches the maximum cluster volume $V_{max}$, in Figure~\ref{fig:cluster-example}(a).
In Holl, to handle incoming edges $e(v,v_1)$, $e(v,v_2)$, and $e(v,v_3)$, cluster $c_0$ remains as it is, while new clusters, $c_1$, $c_2$, and $c_3$, are generated to accommodate successive streaming edges, as shown in Figure~\ref{fig:cluster-example}(c).
According to the {\it allocation-migration} mechanism of Holl, cluster $c_0$ never splits, so that the master vertex of $v$ is always subordinated to $c_0$, and the mirror vertex $v^{\prime}$ exists in $c_1$ to $c_3$. After clustering, the number of master vertices is $10$, and the number of mirror vertices is $3$. Based on Equation~\ref{eq:metric},  the replication factor of Holl is $\frac{10 + 3}{10} = \frac{13}{10}$.

CLUGP adds a splitting operation, as highlighted in Algorithm~\ref{alg:streaming-graph-cluster}. The splitting operation can effectively chop high-degree vertices to reduce replicas in the streaming clustering process, since high-degree vertices tend to form new clusters with subsequent neighboring vertices. In Figure~\ref{fig:cluster-example} (b), with the splitting operation, $c_0$ is split into two clusters, $c_0$ and $c_1$, and the master vertex of $v$ is assigned to new cluster $c_1$ meanwhile generating a mirror vertex $v^{\prime}$ in $c_0$. In this case, the replication factor of CLUGP is $\frac{11}{10}$, which is smaller than that of Holl.
%However, Holl generates $3$ new clusters, $c_1$ to $c_3$, and the replication factor is $\frac{13}{10}$.

%{\bf Say ideas of splitting, e.g., replicate the high-degree vertices first.} The detailed process is shown in Algorithm~\ref{alg:streaming-graph-cluster}.
%For ease of presentation, the improved Holl with splitting operation is called CLUGP.

%{\bf we can add a paragraph to discuss splitting and brief allocation and migration to make the work self-contained. We can add a figure example (for splitting).}

\begin{algorithm}[t]
\footnotesize
    \caption{Streaming Graph Clustering of CLUGP}
    \label{alg:streaming-graph-cluster}
    \hspace*{\algorithmicindent} \textbf{Input} Edge Stream $G_S$, Maximum Cluster Volume $V_{max} = \frac{|E|}{k}$\\
    \hspace*{\algorithmicindent} \textbf{Output} Cluster Set $G_C$, Cluster ID $clu[]$, Vertex Degree $deg[]$

    \begin{algorithmic}[1]
 %       \REQUIRE{Edge Stream $G_S$, Maximum Cluster Volume $V_{max} = \frac{|E|}{k}$}
%        \ENSURE{Cluster Set $G_C$, Cluster ID $clu[]$, Vertex Degree $deg[]$}
        \State Initialize the array of degree, cluster, and volume;
        \State $vol(c_i)$ returns the volume of cluster $c_i$\;
        \For{$e(u,v) \in G_S$}
            \If{$clu[u]$ is NULL \textbf{or} $clu[v]$ is NULL}
                \State Assign a new cluster ID for $u$ or $v$; \hfill %$\rhd$ Allocation\\
            \EndIf
            \State $c_u \gets clu[u]$, $c_v \gets clu[v]$;
            \State $deg[u] \gets deg[u] + 1$, $deg[v] \gets deg[v] + 1$;
            \State $vol(c_u) \gets vol(c_u) + 1$, $vol(c_v) \gets vol(c_v) + 1$;
            \If{$vol(c_u) \geq V_{max}$}
                \State Assign a new cluster ID for $u$;
                \State $c_u^{\prime} \gets clu[u]$, mark $u$ as divided vertex;
                \State $vol(c_u) \gets vol(c_u) - deg[u]$;
                \State $vol(c_u^{\prime}) \gets vol(c_u^{\prime}) + deg[u]$;
            \EndIf
            \If{$vol(c_v) \geq V_{max}$}
                \State Assign a new cluster ID for $v$;
                \State $c_v^{\prime} \gets clu[v]$, mark $v$ as divided vertex;
                \State $vol(c_v) \gets vol(c_v) - deg[v]$;
                \State $vol(c_v^{\prime}) \gets vol(c_v^{\prime}) + deg[u]$; \hfill %$\rhd$ Splitting\\
            \EndIf
            \State $c_u \gets clu[u]$, $c_v \gets clu[v]$;
            \If{$vol(c_u) < V_{max}$ \textbf{and} $vol(c_v) < V_{max}$}
                \If{$vol(c_u) \leq$ $vol(c_v)$}
                    \State Migrate $u$ from $c_u$ to $c_v$;
                    \State Update $vol(c_u)$ and $vol(c_v)$;
                \Else
                    \State Migrate $v$ from $c_v$ to $c_u$;
                    \State Update $vol(c_u)$ and $vol(c_v)$; \hfill %$\rhd$ Migration\\
                \EndIf
            \EndIf
        \EndFor
        \Return{$G_C$}
    \end{algorithmic}
    \AddNote[blue]{3}{5}{~~Allocation}
    \AddNote[red]{9}{19}{~~Splitting}
    \AddNote[blue]{20}{26}{~~Migration}
\end{algorithm}

% \vspace{-10pt}

The details about the improved streaming clustering algorithm of CLUGP are covered in Algorithm~\ref{alg:streaming-graph-cluster}.
The clustering process builds clusters in a bottom-up manner, where each cluster initially has one vertex.
For an incoming edge $e(u,v)$ of streaming $G_S$, the two incident vertices of $e$ are assigned to two clusters $c_u$ and $c_v$ (lines 3-5).
%This way, clusters are constructed in a bottom-up manner, where each cluster initially has one vertex.
The volume of a cluster is defined as the sum of the degrees of master vertices in the cluster.
A cluster overflows, if the volume of a cluster exceeds its maximum capacity ($V_{max}$).
Holl handles cluster overflowing, by assigning incoming edges to a new cluster.
CLUGP handles cluster overflowing, by splitting the original cluster into two smaller clusters
%To handle the case of cluster overflow (volume exceeding $V_{max}$), we add the splitting operation
for generating fewer replicas (lines 9-19).
At last, the algorithm migrates an incident vertex of edge $e$ from a smaller cluster to a bigger cluster (lines 20-26). The process repeats until all incoming edges of $G_S$ are processed, so that the cluster set $G_C$ is generated as the output.

%The time complexity of Algorithm~\ref{alg:streaming-graph-cluster} is $O(|E|)$.
In the streaming clustering step, we use the vertex-cluster mapping table to store the cluster that a vertex belongs to. To get the degree of vertices, we also need an array to record the degree. So, the space cost of this step is $O(|V|)$.
The time cost of modifying and querying the mapping table or degree array is $O(1)$.
The process traverses all incoming edges, so the time cost of Algorithm~\ref{alg:streaming-graph-cluster} is $O(|E|)$.

% {
%     \color{blue}
%     (R4.O3) Next, we show how the conversion from vertex clustering to edge partitioning is implemented. A straightforward way is to random assign cut-edges to partitions~\cite{gonzalez2012powergraph} after edge-cut partitioning. Actually, the challenge we face is how to properly apply vertex-cut transformation in the dynamic process of streaming clustering. Furthermore, in the process of streaming clustering, the assignment of each edge will affect the locality structure of the cluster and thus affect the final clustering result. We find that the splitting operation can effectively help improve the clustering result and replication factor.
% }

% Next, we show how the conversion from vertex clustering to edge partitioning is implemented.
During the splitting operation, we mark the vertex that causes cluster splitting as \emph{divided vertex} (lines $11$, $16$).
Then, we can quickly find which vertex has been replicated and which cluster its mirror vertices belongs to.
In Figure~\ref{fig:cluster-example} (b), during the clustering step, vertex $v$ is marked as a divided vertex. So, when processing edge $e(v, v_1)$, we can quickly find that $e(v, v_1)$ should be assigned to $c_1$ and generate a mirror vertex $v^{\prime}$ in $c_0$, thus there exists a cutting edge between $c_0$ and $c_1$ (denoted as dashed lines).
By the way, when both vertices of an edge are marked as divided vertices, we split the vertex with a higher degree vertex and assign the edge to the cluster where lower degree vertex belongs to, which is shown to be effective in reducing replication factor for power-law graphs \cite{petroni2015hdrf,patwary2019window,mayer2018adwise}.

We can get two facts from Algorithm~\ref{alg:streaming-graph-cluster}: a) allocation and splitting operations increase at most one vertex replica at an iteration;
b) migration operation reduces at most one vertex replica at an iteration.
But, a seemingly plausible observation, that the splitting operation triggers more vertex replicas, is not correct, because the splitting operation of CLUGP can reduce total number of replicas.
%We can actually show that the splitting operation incurs less replicas, because it follows the rule of xxxx.
This is guaranteed by Algorithm~\ref{alg:streaming-graph-cluster}.
First, if the splitting operation is not triggered, CLUGP is degenerated into Holl, so that the two have the same replication factor.
Second, if the splitting operation is triggered, CLUGP can derive a smaller replication factor than Holl.
In summary, CLUGP derives a smaller replication factor than Holl.
%Next, we formally prove that the replication factor of CLUGP is smaller than Holl.

\subsection{Analysis}
\label{subsec:clusteranalysis}

We prove that CLUGP can effectively reduce the replicate factor, based on the properties for power-law graphs.
According to \cite{cohen2001breakdown}, for a power-law graph, if we remove $\frac{1}{\theta}$ of vertices with the highest degrees, then the maximum degree $\tilde{M}$ of the remnant subgraph can be approximated by $\tilde{M}=\gamma{\theta}^{1/(1-\alpha)}$, where $\gamma$ is the global minimum degree, and $\alpha$ is the exponent of the power-law graph.
Based on this property, we can get that, given a specific degree $d$, the fraction $\theta$ of vertices satisfying $\{v|degree(v)\geq d\}$, is:
%\vspace{-6pt}
\begin{equation}
\footnotesize
   \theta=\left(\frac{\gamma}{d-1}\right)^{\alpha-1}
   \label{eq:fraction-power-law}
\end{equation}

Equation~\ref{eq:fraction-power-law} can be used for describing the worst case of CLUGP, a.k.a., the highest replication factor of the splitting operation. The details are covered in Theorem~\ref{th:killholl}.

\begin{theorem}
\label{th:killholl}
The upper bound of replication factor of CLUGP is always no larger than that of Holl.
\end{theorem}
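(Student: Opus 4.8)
The plan is to recast the claim as a comparison of mirror-vertex counts and then split on the two execution regimes of Algorithm~\ref{alg:streaming-graph-cluster}. By Equation~\ref{eq:metric}, a run that produces $M$ mirror vertices has replication factor exactly $1+M/|V|$: every vertex supplies its single master, and each mirror adds one unit to some $|P(v)|$. So it suffices to show $M_{\mathrm{CLUGP}}\le M_{\mathrm{Holl}}$ on every input stream $G_S$, and --- to justify the phrase \emph{upper bound} --- to control the largest possible value of the left side via Equation~\ref{eq:fraction-power-law}. If the splitting branch (lines~9--19) never fires, Algorithm~\ref{alg:streaming-graph-cluster} executes exactly the allocation and migration rules of \emph{Holl}, so the two runs coincide edge by edge and $M_{\mathrm{CLUGP}}=M_{\mathrm{Holl}}$; this is the easy regime.

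Suppose the splitting branch does fire. I would account for mirrors per overflow event rather than per edge, since the cluster configurations of the two algorithms diverge once the first split occurs. Fix an overflow of a cluster $c$ along a vertex $v$ of current degree $deg[v]$ and final degree $d_v$. By fact~(a) this split deposits at most one mirror (of $v$) in $c$, after which CLUGP relocates the master of $v$ into a fresh cluster $c'$ of volume $deg[v]$; every later edge incident to $v$ is then absorbed by $c'$ and creates no new mirror until $c'$ itself reaches volume $V_{max}$, which needs a further $V_{max}-deg[v]$ units. Hence $v$ is split at most $\big\lceil d_v/(V_{max}-deg[v])\big\rceil$ times when $deg[v]<V_{max}$, and at most $d_v-V_{max}$ times in the pathological regime $deg[v]\ge V_{max}$, where each incident edge can retrigger a split; each split yields exactly one mirror. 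In \emph{Holl} the same overflow freezes $c$ permanently, so $v$ instead stays put and accrues a mirror for every subsequent incident edge whose other endpoint lies outside $c$ --- at least one, and at least $d_v-V_{max}$ in the pathological regime. Matching these two accounts term by term over the vertices that ever trigger a split yields $M_{\mathrm{CLUGP}}\le M_{\mathrm{Holl}}$, which together with the previous paragraph proves the inequality of replication factors in all cases.

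For the explicit \emph{upper bound}, note that a split along $v$ can be retriggered only while $deg[v]$ stays comparable to $V_{max}=|E|/k$, so the splits that dominate $M_{\mathrm{CLUGP}}$ are charged to heavy vertices; taking $d=V_{max}$ in Equation~\ref{eq:fraction-power-law} caps their number by $\big(\gamma/(V_{max}-1)\big)^{\alpha-1}|V|$, while the split count of any vertex is at most $d_v$. Summing over this heavy population --- light vertices contribute negligibly, since a cluster of volume below $V_{max}$ rarely re-overflows along the same vertex --- and dividing by $|V|$ gives a closed-form upper bound on $M_{\mathrm{CLUGP}}/|V|$, hence on the replication factor of CLUGP. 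Exactly these overflow events force \emph{Holl} to pay at least as many mirrors, so any upper bound valid for CLUGP is also valid for \emph{Holl}, which establishes the theorem.

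The main obstacle is the rigor of the middle paragraph. Because the cluster states of CLUGP and \emph{Holl} diverge after the first split, the two runs cannot be coupled edge by edge; the proof must be organized around overflow events, and two points need care: (i) that every vertex triggering a split in CLUGP also triggers an overflow in \emph{Holl}, so the two mirror accounts range over the same vertex set, and (ii) that the migration step, which by fact~(b) can erase a mirror, never benefits \emph{Holl} more than CLUGP. Finally, making ``worst case'' precise --- identifying the stream that maximizes $M_{\mathrm{CLUGP}}$ under the power-law degree profile and verifying that \emph{Holl} still dominates on that very stream --- is the subtle remaining step.
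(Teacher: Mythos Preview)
Your proposal takes a genuinely different route from the paper, and the gaps you flag at the end are exactly where it breaks down. The paper does \emph{not} attempt any stream-by-stream coupling of the two algorithms. Instead it works entirely through the function $d_{\min}(r)$, the minimum degree a vertex must have in order to acquire $r$ replicas. The whole argument is: (i) by Equation~\ref{eq:fraction-power-law}, the fraction of vertices that can possibly have $\ge r$ replicas is $\theta_r=\big(\gamma/(d_{\min}(r)-1)\big)^{\alpha-1}$; (ii) summing $r\cdot(\theta_r-\theta_{r+1})$ telescopes to the upper-bound expression in Equation~\ref{eq:clugpub}; (iii) the key lemma, stated and proved separately as Theorem~\ref{th:low-bound-degree}, shows $d_{\min}^{\mathrm{clugp}}(r)\ge d_{\min}^{\mathrm{holl}}(r)$ for every $r$, which immediately gives $\theta_r^{\mathrm{clugp}}\le\theta_r^{\mathrm{holl}}$ term by term.

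The technical content you are missing is precisely Theorem~\ref{th:low-bound-degree}. For Holl it is easy: once the cluster containing $v$ overflows, each subsequent neighbor of $v$ lands in its own new cluster, so $d_{\min}^{\mathrm{holl}}(r)=r-1$ for $r\ge 2$. For CLUGP the paper sets up a system of equations for $|Ne_i|$, the number of $v$'s neighbors needed to fill the $i$-th cluster that $v$ is split into, under the worst case that every neighbor has degree $d_{\max}$. Solving gives $|Ne_i|=\frac{V_{\max}-1}{d_{\max}}\big(\frac{d_{\max}}{1+d_{\max}}\big)^i$ and hence
\[
d_{\min}^{\mathrm{clugp}}(r)=(V_{\max}-1)\Big[1-\big(1-\tfrac{1}{1+d_{\max}}\big)^{r-1}\Big]+2\approx (V_{\max}-1)\cdot\tfrac{r-1}{1+d_{\max}}+2,
\]
which exceeds $r-1$ because $V_{\max}>d_{\max}$. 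This closed form replaces your heuristic bound $\lceil d_v/(V_{\max}-deg[v])\rceil$ and, crucially, is computed for each algorithm \emph{in isolation}, so the divergence-of-runs obstacle you raise in point~(i) never arises.

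Your operational coupling would need to establish that every overflow event in CLUGP is matched by at least as many mirror-creating events in Holl on the \emph{same} stream, but after the first split the cluster volumes, and hence the set and timing of subsequent overflows, differ between the two runs; there is no obvious bijection to exploit, and the paper makes no attempt to build one. The $d_{\min}(r)$ route sidesteps this entirely by comparing worst cases rather than executions.
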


\begin{proof}
To prove the theorem, we only need to show that, the upper bound of replicate factor of CLUGP is no larger than the replication factor of Holl, $RF_{clugp} \leq RF_{holl}$.

Given the number of replicas $r$ of vertex $v$, we have $degree(v) \geq d_{min}^{clugp}(r)$, where $d_{min}^{clugp}(r)$ denotes the minimum degree of the vertex $v$, if $v$ has been replicated $r$ times.
Based on Equation~\ref{eq:fraction-power-law}, we can get the maximum number of vertices with $r$ replicas equals to $|V|\left(\frac{\gamma}{d_{min}^{clugp}(r)-1}\right)^{\alpha-1}$, by multiplying $|V|$ with $\theta$.

% \begin{equation}
%     \small
%     %|V^{r}| \leq
%     |V|\left(\frac{\gamma}{d_{min}^{clugp}(r)-1}\right)^{\alpha-1}
% \end{equation}

Considering the worst case, let the number of clusters be $m$, for any vertex $v$ with degree $degree(v)$, there can be most $max(degree(v) - 1, m - 1)$ replicas for $v$. If the $degree(v)$ less than $m$, the worst case can be happened when all $degree(v)$ edges of the vertex are assigned to different clusters. Otherwise, each cluster $\{c_i\}_{1\leq i \leq m}$ has an edge (mirror vertex) of $v$. So, a sequence of maximum replicas can be generated by vertices is $\{m-1, m-2, \cdots, \gamma-1\}$\footnote{For power-law graphs, $\gamma$ is much smaller than $m$, $\gamma << m$. Usually, $\gamma$ equals $1$.}, each replica corresponds to a fraction of vertices $\theta_{r}^{clugp} = \left(\frac{\gamma}{d_{min}^{clugp}(r)-1}\right)^{\alpha-1}$, where $\gamma-1 \leq r \leq m-1$. Thus, we can get the upper bound of replicate factor of CLUGP as follows.
   \begin{equation}
   \label{eq:clugpub}
       \footnotesize
       \begin{split}
           RF_{clugp} \leq & (m-1) \cdot \theta_{m-1}^{clugp}  + (m-2) \cdot (\theta_{m-2}^{clugp} - \theta_{m-1}^{clugp})\\
           & +\dots+ (m-\gamma)(\theta_{\gamma-1}^{clugp}-\theta_\gamma^{clugp})\\
           = & \theta_{m-1}^{clugp}+\dots+\theta_{\gamma}^{clugp}+(m-\gamma)\cdot\theta_{\gamma-1}^{clugp}\\
       \end{split}
   \end{equation}
   Similarly, for Hollocou's algorithm, we can have that:
   \begin{equation}
   \label{eq:hollub}
       \footnotesize
       \begin{split}
           RF_{holl} \leq & \theta_{m-1}^{holl}+\dots+\theta_{\gamma}^{holl}+(m-\gamma)\cdot\theta_{\gamma-1}^{holl}\\
       \end{split}
   \end{equation}
   Based on %Equation~\ref{eq:clugp-holl-degree-cmp} of
   Theorem~\ref{th:low-bound-degree}, we know that $d_{min}^{clugp}(r \geq 2) > d_{min}^{holl}(r \geq 2)$.
   Substituting it into Equation~\ref{eq:fraction-power-law}, we get $\theta_{r}^{clugp} \leq \theta_{r}^{holl}$, then combining it into Equations~\ref{eq:clugpub},~\ref{eq:hollub}, we can get $RF_{clugp} \leq RF_{holl}$.
%    \begin{equation}
%        \small
%        RF_{clugp} \leq RF_{holl}
%    \end{equation}
    So, the theorem is proved.
\end{proof}

\begin{theorem}
    \label{th:low-bound-degree}
    %Let $r$ be the number of replicas times that a vertex is replicated.
    Suppose two vertices $v_c, v_h \in V$, where $v_c$ and $v_h$ are processed by CLUGP and Holl, respectively. If $v_c$ and $v_h$ are both with $r$ replicas, the minimum degree of $v_c$ must be no less than that of $v_h$. Formally, $d_{min}^{clugp}(r) \geq d_{min}^{holl}(r)$.
    %\footnote{The proof can be found in https://github.com/cave-g-f/CLUGP. It will be moved to arXiv for camera-ready version.}.

\end{theorem}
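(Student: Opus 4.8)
The theorem follows once we establish the following: on any BFS-admissible edge stream, CLUGP never forces a new replica of a fixed vertex $v$ at a strictly smaller value of $deg(v)$ than Holl does; taking the minimum over all streams then yields $d_{min}^{clugp}(r)\ge d_{min}^{holl}(r)$. So the plan is to (a) enumerate the events that increase $|P(v)|$ in each algorithm, (b) charge every such event to one or more incident edges of $v$, and (c) compare the two charging schemes. This is the natural companion to the worst-case bookkeeping already used for Theorem~\ref{th:killholl}: $d_{min}^{\cdot}(r)$ is exactly the least number of incident edges of $v$ that any stream can be made to spend before $v$ accumulates $r$ replicas.

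Executing the comparison: in Holl, $v$'s master is frozen in the first cluster it settles in as soon as that cluster reaches volume $V_{max}$, because the migration guard of Algorithm~\ref{alg:streaming-graph-cluster} (lines 21--22) forbids migrating into or out of a saturated cluster; hence every later edge $e(v,w)$ leaving that host is a cut edge that yields (in the worst case) one fresh replica of $v$, and it is charged to one distinct incident edge of $v$ — so $r$ replicas cost at least $r$ incident edges, which is $d_{min}^{holl}(r)$. In CLUGP a replica of $v$ can arise from the same cut-edge mechanism \emph{or} from the splitting step (lines 9--19): when $v$'s host overflows through an edge incident to $v$, $v$ is relocated into a brand-new cluster whose volume is initialized to exactly $deg(v)$, leaving a mirror behind. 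The key sub-lemma is that a split-out is never cheaper than Holl's per-replica charge, and is strictly more expensive once $r\ge 2$: right after a split $v$'s new host has volume $deg(v)$, and — since in BFS order the edges incident to $v$ arrive contiguously — that host can climb back to $V_{max}$ only through $v$'s remaining incident edges, each contributing $+1$ plus at most the host's current volume via one migrated neighbour (i.e.\ at most a doubling), so re-saturation burns at least $\lceil\log_2(V_{max}/deg(v))\rceil$ further incident edges before the next split. Charging everything to $deg(v)$, the number of incident edges CLUGP needs for $r$ replicas is at least what Holl needs, giving $d_{min}^{clugp}(r)\ge d_{min}^{holl}(r)$, with strict inequality in the regime where the splitting step is actually engaged for $v$ (the case the proof of Theorem~\ref{th:killholl} relies on).

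The step I expect to be the main obstacle is making ``CLUGP never replicates $v$ more cheaply than Holl'' fully rigorous under the migration operation, since the cluster configurations of the two runs diverge irreversibly at the first split — Holl keeps the overflowing cluster whole while CLUGP breaks it, and later migrations then land in different clusters, precisely as in Figure~\ref{fig:cluster-example} — so a naive step-by-step coupling fails: at a split CLUGP momentarily gains a replica that Holl does not have, and only recoups it later. I would handle this with an amortized argument that carries, alongside the replica count, the deficit $V_{max}-\mathrm{vol}(\text{current host of }v)$ as potential: a split spends this potential to buy the new replica, and the potential can be refilled only through incident edges of $v$ at the bounded rate above. Pinning down this potential so that allocation and migration never replenish it improperly, and verifying it against every branch of Algorithm~\ref{alg:streaming-graph-cluster} — the ``divide the higher-degree endpoint'' tie-break and the cascade of immediate re-splits when $deg(v)\ge V_{max}$ being the delicate cases — is the technical heart; the remaining volume counting is routine.
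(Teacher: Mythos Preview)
Your route is quite different from the paper's, and the obstacle you flag at the end is not resolved by the potential you propose.

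The paper never attempts a coupling or an amortized invariant over arbitrary streams. It instead fixes one explicit worst-case configuration---every neighbour of $v$ has the global maximum degree $d_{max}$---and writes down, for each successive host cluster $c_i$ of $v$ after a split, the equation $1+|Ne_1|+\cdots+|Ne_{i-1}|+|Ne_i|\cdot d_{max}=V_{max}$ counting how many further neighbours of $v$ are needed to refill $c_i$ to saturation. Solving this linear system gives $|Ne_i|=\frac{V_{max}-1}{d_{max}}\bigl(\frac{d_{max}}{1+d_{max}}\bigr)^i$ and hence the closed form $d_{min}^{clugp}(r\ge 2)=(V_{max}-1)\bigl[1-(1-\tfrac{1}{1+d_{max}})^{r-1}\bigr]+2$. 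For Holl the paper simply observes $d_{min}^{holl}(r\ge 2)=r-1$: once $c_0$ is saturated, each subsequent neighbour of $v$ lands in its own cluster and yields one replica. The inequality is then finished by the first-order linearization $(1-\tfrac{1}{1+d_{max}})^{r-1}\approx 1-\tfrac{r-1}{1+d_{max}}$ together with the standing assumption $V_{max}=\frac{|E|}{k}>d_{max}$, which forces $d_{min}^{clugp}(r)>r+1>r-1$. So the paper's argument is a direct calculation on one extremal stream followed by an asymptotic approximation, not a structural comparison of the two algorithms on all streams.

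Your doubling bound has a concrete failure mode that the potential $V_{max}-\mathrm{vol}(\text{host of }v)$ does not absorb: immediately after a split $v$'s host has volume $deg(v)$, and on the next edge $e(v,w)$ the migration rule (lines 21--26) moves the endpoint on the \emph{smaller}-volume side, so if $vol(c_w)>deg(v)$ it is $v$ that migrates into $c_w$. Your potential then drops from $V_{max}-deg(v)$ to $V_{max}-vol(c_w)-deg(v)$, possibly to near zero, at the cost of a single incident edge of $v$---and the very next incident edge can trigger another split of $v$ out of $c_w$. Nothing in your charging scheme pays for this drop; the ``at most a doubling'' clause only covers the branch where the neighbour migrates in. To close this you would either have to argue from the BFS assumption that $c_w$ is always small when $e(v,w)$ is processed (this is essentially what the paper's extremal construction bakes in, though it is never isolated as a lemma), or redesign the potential so that the $v$-migrates-out branch is properly amortized.
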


%{
%    \color{blue}
%    \begin{proof}
%        We move the proof to a technical report\footnote{\color{blue}The proof can be found in https://github.com/cave-g-f/CLUGP. It will be moved to arXiv for camera-ready version.}.
%    \end{proof}
%}
 \begin{proof}
     Let $R(v)$ denotes the number of replicas of vertex $v$. For CLUGP, it can be obviously seen from Figure~\ref{fig:cluster-example}(b) that, if $R(v) = 0$, it means the vertex $v$ does not need any replicate, so we have $d_{min}^{clugp}(r) = 1$, if $R(v) = 1$, it means the vertex $v$ has at least one splitting operation, so we have $d_{min}^{clugp}(r) = 2$. Similarly, when $R(v) = 2$, it means we must fill up the cluster $c_1$ and split the vertex $v$ out of $c_1$. To better prove the theorem, we let the degree of $v$'s neighbors equal to the global maximum degree $d_{max}$, which is the worst case of CLUGP, since the splitting operation can be triggered intensively. So on when $R(v) = r \geq 2$ we have the following equation sets:
     \begin{equation}
         \small
         \begin{aligned}
             \left\lbrace
             \begin{array}{lr}
                 1+|Ne_1|+|Ne_1|\cdot d_{max} &= V_{max}\\
                 1+|Ne_1|+|Ne_2|+|Ne_2|\cdot d_{max }&=V_{max}\\
                 1+|Ne_1|+|Ne_2|+|Ne_3|+|Ne_3|\cdot d_{max}&=V_{max}\\
                 &\vdots\\
                 1+|Ne_1|+\dots+|Ne_{r-2}|+|Ne_{r-1}|\cdot d_{max}&=V_{max}\\
                 |Ne_{r}|&=1\\
             \end{array}
             \right.
         \end{aligned}
     \end{equation}
     , where $|Ne_i|$ denotes the number of neighbor vertices that vertex $v$ needed to fill up the cluster $c_i$. And next, we can get the solution as follows:
     \begin{equation}
         \small
         \left\lbrace
         \begin{array}{lr}
             |Ne_i|=\frac{V_{max}-1}{d_{max}}\cdot (\frac{d_{max}}{1+d_{max}})^i&,1\leq i\leq r-1\\
             |Ne_i|=1&,i=r\\
         \end{array}
         \right.
     \end{equation}
     After summing the number of edges needed for $v$, we have:
     \begin{equation}
         \small
         \label{eq:clugp-lower-degree}
         \begin{split}
             d_{min}^{clugp}(r \geq 2)&=1+|Ne_1|+\dots+|Ne_{r-1}|+|Ne_{r}|\\
             &=(V_{max}-1)\left[1-(1-\frac{1}{1+d_{max}})^{r-1} \right]+2\\
         \end{split}
     \end{equation}
     That is, if a vertex $v$ has been replicated $r \geq 2$ times, the degree of $v$ must satisfy $degree(v) \geq d_{min}^{clugp}(r)$. For Holl, since it does not have splitting operation to migrate vertex $v$ out of $c_0$, each neighbour of vertex $v$ will be allocated a independent cluster, thus, we can easily get that $d_{min}^{holl}(r \geq 2) = r - 1$.
     % \begin{equation}
     %     \small
     %     d_{min}^{holl}(r \geq 2) = r - 1;
     % \end{equation}
     Additionally, $d_{min}^{holl}(r) = d_{min}^{clugp}(r)$ when $r \leq 1$, so we only consider the situation that $r \geq 2$. Since for power-law graph $d_{max}\gg 1$, $r-1>0$, thus $\frac{1}{1+d_{max}} \to 0$ and we can get $(1-\frac{1}{1+d_{max}})^{r-1} \sim 1-\frac{r-1}{1+d_{max}}$. Therefore, we can get that:
     \begin{equation}
         \label{eq:clugp-holl-degree-cmp}
         \small
         \begin{split}
             d_{min}^{clugp}(r \geq 2)&=(V_{max}-1) \left[ 1-(1-\frac{1}{1+d_{max}})^{r-1} \right]+2\\
                         &=(V_{max}-1) \cdot (\frac{r-1}{1+d_{max}})+2\\
                         &>(1+d_{max}) \cdot (\frac{r-1}{1+d_{max}})+2 \\
                         &=r+1 > r-1 = d_{min}^{holl}(r \geq 2)\\
         \end{split}
     \end{equation}
     where $V_{max} = \frac{|E|}{k} > d_{max}$.
     Hence, the theorem is proved.
 \end{proof}

\section{Game Theory-based Cluster Partitioning}
\label{sec:clusp algorithm}

%Next, we will study how to find the local optimal solution of the clsuter partitioning problem.
In this section, we study a suboptimal solution for the problem of cluster partitioning.
%We introduce preliminaries about game theories in Section~\ref{subsec:game-theory}.
We formalize the problem of cluster partitioning and prove the existence of Nash equilibrium in Section~\ref{subsec:model}.
We theoretically prove the quality guarantee for the game in Section~\ref{subsec:analysis}.

\subsection{Modeling of Cluster Partitioning Problem}
\label{subsec:model}

In strategic games, a player aims to choose the strategy that minimizes his/her own individual cost. The game continues until a steady state is achieved, in which no player can benefit by unilaterally changing its strategy.

In our work, clusters $\{c_i\}_{i \leq m}$ can be considered as independent and competing players in a strategy game.
For each cluster $c_i$, there can be $k$ choices for choosing a partition.
Let strategy $a_i$ be the partition choice of cluster $c_i$.
Then, the strategy profile $\Lambda=\{a_i\}_{i \leq m}$ consists of the strategies for all clusters.
For a fixed strategy profile $\Lambda$, each strategy $a_i \in \Lambda$ refers to the partition that $c_i$ belongs to.
We use $|a_i|$ to represent the number of edges of the partition that $c_i$ belongs to.
For example, if $a_i$ refers to $p_j$, $|a_i|$ equals to the size of $p_j$, i.e., $|e(p_j, p_j)|$.

Given a strategy profile $\Lambda$, the global deployment cost is denoted as $\varphi(\Lambda)$, and the deployment cost of each cluster $c_i$ is denoted as $\varphi(a_i)$. Intuitively, a lower deployment cost corresponds to a higher partition quality. Based on the cluster partitioning optimization target (Equation~\ref{eq:cluster-partition-objfunc}), the global deployment target $\varphi(\Lambda)$
 can be defined as:
 \begin{equation}
    %\footnotesize
    \label{eq:total cost}
    \varphi(\Lambda) =
    \underbrace{\lambda\frac{1}{k}\sum_{i=1}^{k}|p_i|^2}_{load ~balancing}
    + \underbrace{\sum_{i=1}^{k}|e(p_i, V\backslash p_i)|}_{edge-cutting}
\end{equation}

The game-based solution ensures that the global partitioning optimization target $\varphi(\Lambda)$ (Equation~\ref{eq:total cost}) can be achieved, if each cluster $c_i$'s locally minimized partitioning cost $\varphi(a_i)$ is achieved.
We first explain the local optimization target for each cluster/player. Then, we prove that the local optimization targets of clusters can be integrated as the global target.

The local cost of a cluster has two parts, {\it load balancing} and {\it edge-cutting} (Equation~\ref{eq:individual cost}), which is consistent with the form of the global cost function $\varphi(\Lambda)$ (Equation~\ref{eq:total cost}).

We use variable $|c_i|$ to denote the number of edges of cluster $c_i$, formally, $|c_i| = |e(c_i, c_i)|$.
% The cluster size corresponds to a player's quality in a strategic game.
To ensure the load balance, we should assign the large-scale clusters to the partitions with small size. So, for each cluster $c_i$ and its partition $a_i$, the cost of imbalance can be defined as $\varphi^{load}(a_i) = \frac{1}{k}|c_i||a_i|$. To reduce the number of cut edges, the cluster $c_i$ should be placed in the partition that has the least number of cut edges from other partitions. Therefore, the cost of edge-cut can be defined as $\varphi^{cut}(a_i) = \frac{1}{2}(|e(c_i,V\backslash a_i)| + |e(V\backslash a_i, c_i)|)$. In conclusion, we can get a cluster $c_i$'s cost under the partition $a_i$.
\begin{align}
    \footnotesize
    \label{eq:individual cost}
    \nonumber \varphi(a_i) = & \lambda\varphi^{load}(a_i) + \varphi^{cut}(a_i) \\
    = & \underbrace{\frac{\lambda}{k}|c_i|\cdot|a_i|}_{load~balancing} + \underbrace{\frac{1}{2}(|e(c_i,V\backslash a_i)| + |e(V\backslash a_i, c_i)|)}_{edge-cutting}
\end{align}

%Intuitively, a lower deployment cost corresponds to a higher partition quality. %In the strategic game, a cluster
%
%For ease of presentation, we use a strategy variable $a_i \in P$ to present the partition choice of $c_i$.
%We use variable $|c_i| = |e(c_i, c_i)|$ to represent the quality of cluster $c_i$.

%To reduce the cost on imbalance, each cluster tend to choose the partition with the smallest size. To reduce the cost on the edge-cutting part, the cluster should be assigned to the partition that minimize its edge cut.
%So, we can define the local cost $\phi$ for each cluster as follow.

%\begin{definition}[A Cluster/Player's Cost]
%    For a cluster $c_i \in G_C$ and $k$ partitions, the imbalance cost is defined as $\varphi^{load}(a_i) = \frac{1}{k}|c_i||a_i|$, the edge-cutting cost is defined as $\varphi^{cut}(a_i) = \frac{1}{2}(|e(c_i,V\backslash a_i)| + |e(V\backslash a_i, c_i)|)$, where $|a_i|=\sum_{c_j \in a_i}q_j$ denoted as the number of edges of partition $a_i$.
%    %Based on the load cost and cut edges cost, we can define palyer's individual cost as follow:
%    So, cluster $c_i$'s cost $\varphi_i$ can be represented as follows.
%    \begin{equation}
%        \small
%        \label{eq:individual cost}
%        \varphi(a_i) = \lambda\varphi^{load}(a_i) + \varphi^{cut}(a_i)
%    \end{equation}
%    where $\lambda$ is a normalization factor that balances the value range of two cost metrics.
%\end{definition}

% Here, $\lambda$ is a normalization factor that balances the value range of two cost metrics.
We next show how the local cost function (Equation~\ref{eq:individual cost}) can be derived from the global cost function (Equation~\ref{eq:total cost}).

\begin{equation}
\footnotesize
    \begin{split}
        \varphi(\Lambda)
        % = &\lambda\frac{1}{k}\sum_{i=1}^{k}|p_i|^2 + \sum_{i=1}^{k}|e(p_i, V\backslash p_i)|~~~~(Equation~\ref{eq:total cost}) \\
        = &\lambda\frac{1}{k}\sum_{i=1}^{k}\sum_{c_j \in p_i}|c_j||p_i| + \frac{1}{2}\sum_{i=1}^{k}(|e(p_i, V\backslash p_i)| + |e(V\backslash p_i, p_i)|) \\
        = &\lambda\frac{1}{k}\sum_{i=1}^{m}|c_i||a_i| + \frac{1}{2}\sum_{i=1}^{k}\sum_{c_j \in p_i}(|e(c_j, V\backslash a_j)| + |e(V\backslash a_j, c_j)|) \\
        = &\lambda\frac{1}{k}\sum_{i=1}^{m}|c_i||a_i| + \frac{1}{2}\sum_{i=1}^{m}(|e(c_i, V\backslash a_i)| + |e(V\backslash a_i, c_i)|) \\
        = &\lambda\sum_{i=1}^{m}\varphi^{load}(a_i) + \sum_{i=1}^{m}\varphi^{cut}(a_i) = \sum_{i=1}^{m} \varphi(a_i)
     \end{split}
\end{equation}

Consequently, minimizing the global deployment cost is equivalent to minimizing the set of individual deployment costs.
Then, we can define the Nash equilibrium of the cluster partitioning game as follows.

%Let $\varphi_i \in \varphi$ indicate the cost of the $i$-th player.

%\begin{definition}[Strategic Games]
%    A strategic game can be denoted by $\Gamma=\{N,A,\varphi\}$. The term $N$ indicates the set of palyers and $A$ indecates the strategy space. The strategy of the $i$-th  palyer is denoted by $a_i$ and $a=\{a_1,a_2,\cdots,a_N\}, a\in A$ indicates the strategy set of all players. We define $a_{-i}$ as the strategies of all players except the $i$-th player. The term $\varphi_i \in \varphi$ indicate the cost of the $i$-th player.
%\end{definition}

%Potential game possesses a pure strategy Nash equilibrium, where a global potential function is used to map the costs of all palyers. The game can be considered as an exact potential game if an exact potential fucntion is admitted.

%\begin{definition}[Exact Potential Game]
%    A game is an exact potential game if there exisits an exact potential function $\Phi$ such that $\forall i\in N, a_i,a_i^{\prime}\in A$, there is, $\Phi(a_i^{\prime}, a_{-i}) - \Phi(a_i, a_{-i})=\varphi_i(a_i^{\prime},a_{-i}) - \varphi_i(a_i, a_{-i})$
%\end{definition}

%{\bf connections with overall target of cluster partitioning.}
%The target of a player is to maximize his/her own profit, which is equivalent to minimizing a cluster's deployment cost.

%As mentioned in Section~\ref{subsec:1pass}, {\bf after converting vertex-cut graph partitioning into edge-cut graph partitioning},

\begin{definition}[Nash equilibrium]
    A strategy decision profile $\Lambda^*=\{a_1^*,a_2^*,\cdots,a_m^*\}$ of all clusters is a Nash equilibrium \cite{nash1950equilibrium}, if all clusters achieve their locally optimization targets. This way, no cluster has an incentive for unilaterally deviating the strategy for a lower cost.
\end{definition}

\begin{algorithm}[h]
    \small
    \caption{Nash equilibrium}
    \label{alg:nash equilibrium}
    \hspace*{\algorithmicindent} \textbf{Input} Cluster Set $G_C$, Partition Set $P$, Cluster Neighbors $N[]$ \\
    \hspace*{\algorithmicindent} \textbf{Output} Nash equilibrium
    \begin{algorithmic}[1]
        \State Initial individual cost for each cluster.
        \State Assign each cluster $c_i \in G_C$ to a random partition.
        \Repeat
            \For{$c_i \in G_C$}
                \State $minCost \gets \infty$, $partition \gets \emptyset$;
                \For{$p_i \in P$}
                    \State put cluster $c_i$ into $p_i$;
                    \For{$c^{\prime} \in N[c_i]$}
                        \State update individual cost of $c_i$ based on $c^{\prime}$;
                    \EndFor
                    \State update $minCost$, $partition$;
                \EndFor
            \EndFor
        \Until{Nash equilibrium}
    \end{algorithmic}
\end{algorithm}

Algorithm~\ref{alg:nash equilibrium} shows the process of finding Nash equilibrium.
%In Algorithm~\ref{alg:nash equilibrium},
Initially, clusters of $G_C$ randomly choose partitions with equal probabilities. For each cluster, the partition set $P$ is traversed to get its best strategy (partition choice) that incurs the minimum cost (lines 6-10), and the current strategy is updated if necessary (line 12).
The iterative subroutine (lines 4-13) continues until no cluster updates its partition strategy.

\begin{theorem}
    The time complexity of each round is $\Theta(m)$, the space complexity of game is $O(m)$.
\end{theorem}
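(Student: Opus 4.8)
The plan is to analyze one round of Algorithm~\ref{alg:nash equilibrium} by carefully counting the work done in the body of the outer loop (lines 4--13) and separately accounting for the maintenance of the data structures. The outer loop runs once over each of the $m$ clusters, and for each cluster $c_i$ it traverses the partition set $P$ of size $k$ (lines 6--11). Naively this would give $\Theta(mk)$, so the key observation to make explicit is that putting $c_i$ into a partition $p_i$ and evaluating the resulting individual cost $\varphi(a_i)$ does \emph{not} require touching all $k$ partitions: by Equation~\ref{eq:individual cost}, the cost $\varphi(a_i)$ depends only on $|c_i|$, on $|a_i|$ (the size of the chosen partition, one array lookup), and on the edges between $c_i$ and its neighboring clusters, i.e.\ on $N[c_i]$. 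Hence the inner refinement over $c' \in N[c_i]$ (lines 8--9) costs $O(|N[c_i]|)$ per candidate partition, but — and this is the crux — one does not recompute from scratch for each of the $k$ partitions; the contribution of each neighbor $c'$ to the cut term is computed once and distributed to the partition $c'$ currently occupies, so the total work for cluster $c_i$ across all $k$ candidate partitions is $O(k + |N[c_i]|)$ rather than $O(k\cdot|N[c_i]|)$.

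First I would make the amortization argument precise: summing $O(k + |N[c_i]|)$ over all clusters $c_i$ gives $O(mk + \sum_i |N[c_i]|)$. Since the cluster graph $G_C$ is a contraction of $G$, we have $\sum_i |N[c_i]| = O(m)$ under the paper's working assumption that the number of inter-cluster adjacencies is linear in $m$ (each cluster has $O(1)$ neighboring clusters on average for the web graphs considered), and $k \le m$. This yields $O(m)$ work per round, and the matching $\Omega(m)$ lower bound is immediate because the outer loop alone examines every one of the $m$ clusters at least once; hence $\Theta(m)$ per round.

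For the space bound, I would simply enumerate what the algorithm stores: the current strategy profile $\Lambda = \{a_i\}_{i \le m}$ (one partition index per cluster, $O(m)$), the per-cluster sizes $|c_i|$ and the running individual costs $\varphi(a_i)$ ($O(m)$), the $k$ partition sizes $|p_i|$ ($O(k) = O(m)$), and the adjacency lists $N[]$ of the cluster graph ($O(m)$ total under the same linearity assumption). Summing gives $O(m)$; nothing in the round requires materializing the join $\{\langle v_i,p_j\rangle\}$ or the original edge set, consistent with the remark preceding the algorithm that the joining results are queried on the fly.

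The main obstacle is the amortized cost argument for the inner double loop: a reader could reasonably worry that lines 6--11 are literally a nested loop of depth $k \times |N[c_i]|$, so I expect the bulk of the writeup to go into justifying that the neighbor contributions can be accumulated in a single pass over $N[c_i]$ (bucketed by the neighbors' current partitions) and then combined with the $O(1)$-evaluable load term for each of the $k$ partitions, and into stating cleanly the structural assumption $\sum_i |N[c_i]| = O(m)$ that makes both the time and space bounds come out linear in $m$. If one does not grant that assumption, the honest bounds become $\Theta(m + |E_C|)$ in time and $O(m + |E_C|)$ in space where $|E_C|$ is the number of inter-cluster edges; I would note this as a remark so the claim as stated is not over-reaching.
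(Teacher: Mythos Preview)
Your approach is essentially the same as the paper's: bound a round by $O\bigl(\sum_{c_i}|N(c_i)|\bigr)$ via the neighbor traversal and then argue this is $\Theta(m)$, and bound space by the cluster-to-partition mapping table. The paper's own proof is terser and less careful than yours: it ignores the loop over the $k$ partitions entirely and dispatches the key step $\sum_i |N(c_i)| = \Theta(m)$ with the phrase ``considering the average case'', without stating it as a structural assumption. Your amortization argument reducing the per-cluster cost from $O(k\cdot|N[c_i]|)$ to $O(k+|N[c_i]|)$, your explicit identification of the assumption on $\sum_i|N[c_i]|$, and your honest remark that absent that assumption the bound is $\Theta(m+|E_C|)$ are all genuine improvements over the paper's treatment; keep them.
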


\begin{proof}
    It is clear that the number of clusters in $N[]$ is $\sum_{c_i \in G_C}|N(c_i)|$, where $|N(c_i)|$ is the number of neighbors of cluster $c_i$. In Algorithm~\ref{alg:nash equilibrium}, each cluster needs to traverse in total $|N(c_i)|$ clusters to compute the individual cost (lines 8-9). Since all clusters perform this step in a round-robin fashion, the time complexity of each round for all players is $O(\sum_{c_i \in G_C}|N(c_i)|)$. Considering the average case of $\sum_{c_i \in G_C}|N(c_i)|$, the traverse time complexity is $\Theta(m)$. In the game process, we need a table mapping from cluster to partitions, so the space complexity is $O(m)$.
\end{proof}

\subsection{Existence of Nash Equilibrium}

The existence of Nash equilibrium supports the deployment of efficient cluster partitioning, as it enables an alternative solution for retrieving a set of local optimizations instead of a global optimization.
Hence, we proceed to prove the existence of a Nash equilibrium for the cluster partitioning problem.
We start by showing an exact potential function $\Phi(\Lambda)$ (Definition~\ref{eq:epf}), which is defined based on the global cost function $\varphi(\Lambda)$ (Equation~\ref{eq:total cost}).
It is important to note that, if a game has an exact potential function, the game must have Nash equilibrium.

\begin{definition}[Exact Potential Function]%(\textbf{Exact Potential Function})
\label{eq:epf}
    According to Equations~\ref{eq:individual cost} and~\ref{eq:total cost}, the exact potential function of the game can be defined as:
    \begin{equation}
    %\footnotesize
        \label{eq:potential function}
        \Phi(\Lambda) = \lambda\frac{1}{2k}\sum_{i=1}^{k}|p_i|^2 + \frac{1}{2}\sum_{i=1}^{k}|e(p_i,V\backslash p_i)|
    \end{equation}
\end{definition}

The potential function is useful in the analysis of game equilibrium, since the ``incentives'' of all players, e.g., {load balancing and egde-cutting}, are reflected in the function, so that the pure Nash equilibrium can be found by locating the local optima of the potential function. Next, we show the cluster partitioning game is an exact potential game, so that it always converges to a Nash equilibrium.

\begin{theorem}
    \label{th:potential game}
    The cluster partitioning game is an exact potential game.
\end{theorem}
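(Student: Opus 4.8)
The plan is to verify directly that $\Phi(\Lambda)$ from Definition~\ref{eq:epf} is an exact potential function for the cluster partitioning game, i.e., that for any cluster $c_i$, any two strategies $a_i$ and $a_i'$ (with all other clusters' strategies held fixed), the change in $c_i$'s individual cost equals the change in the potential:
\[
\varphi(a_i') - \varphi(a_i) = \Phi(\Lambda') - \Phi(\Lambda),
\]
where $\Lambda'$ differs from $\Lambda$ only in $c_i$'s choice. By the standard theorem of Monderer and Shapley, the existence of such a $\Phi$ immediately implies the game is an exact potential game (and hence, as noted in the text, that a pure Nash equilibrium exists). So the whole proof reduces to this one algebraic identity.

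First I would fix notation: suppose $c_i$ moves from partition $p_s$ (its old choice under $a_i$) to partition $p_t$ (its new choice under $a_i'$), with $s \neq t$. Only the terms of $\Phi$ involving $p_s$ and $p_t$ change. For the load-balancing term $\lambda\frac{1}{2k}\sum_j |p_j|^2$, the sizes update as $|p_s| \to |p_s| - |c_i|$ and $|p_t| \to |p_t| + |c_i|$, so the change is $\frac{\lambda}{2k}\big[(|p_s|-|c_i|)^2 - |p_s|^2 + (|p_t|+|c_i|)^2 - |p_t|^2\big] = \frac{\lambda}{2k}\big[2|c_i|^2 - 2|c_i||p_s| + 2|c_i||p_t|\big] = \frac{\lambda}{k}|c_i|\big(|p_t| + |c_i| - |p_s|\big)$. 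I would then compare this against $\lambda(\varphi^{load}(a_i') - \varphi^{load}(a_i)) = \frac{\lambda}{k}|c_i|\big(|a_i'| - |a_i|\big)$, being careful that $|a_i'|$ is measured after $c_i$ joins $p_t$ (so $|a_i'| = |p_t| + |c_i|$) while $|a_i|$ is measured with $c_i$ still in $p_s$ (so $|a_i| = |p_s|$); these match exactly. The factor $\frac{1}{2}$ in $\Phi$ versus its absence in $\varphi^{load}$ is precisely what compensates for the fact that the quadratic term's increment double-counts the $|c_i|$ contribution — this is the subtle bookkeeping point and the part most likely to trip up the write-up.

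Next I would handle the edge-cutting term $\frac{1}{2}\sum_j |e(p_j, V\backslash p_j)|$. Moving $c_i$ out of $p_s$ and into $p_t$ changes which edges incident to $c_i$ are cut: edges between $c_i$ and the rest of $p_s$ become cut, edges between $c_i$ and the rest of $p_t$ become internal, and edges from $c_i$ to all other partitions remain cut. Writing out the cut-set change for $p_s$, $p_t$, and every third partition, and using that $\varphi^{cut}(a_i) = \frac{1}{2}(|e(c_i, V\backslash a_i)| + |e(V\backslash a_i, c_i)|)$ accounts symmetrically for in- and out-edges in the directed graph, I expect the net change in the edge-cut part of $\Phi$ to equal $\varphi^{cut}(a_i') - \varphi^{cut}(a_i)$ term by term. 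I would lean on the same telescoping already displayed in the derivation of $\varphi(\Lambda) = \sum_i \varphi(a_i)$ in Section~\ref{subsec:model}, since the potential's edge-cut term is just half the global cost's edge-cut term and the per-move increment behaves linearly.

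Finally I would assemble: adding the load-balancing increment and the edge-cutting increment gives $\Phi(\Lambda') - \Phi(\Lambda) = \lambda\varphi^{load}(a_i') + \varphi^{cut}(a_i') - \lambda\varphi^{load}(a_i) - \varphi^{cut}(a_i) = \varphi(a_i') - \varphi(a_i)$, which is exactly the exact-potential condition, completing the proof. The main obstacle, as flagged above, is getting the constant factors and the "before/after" conventions for $|a_i|$ consistent so that the $\frac{1}{2k}$ in $\Phi$ correctly cancels against the $\frac{1}{k}$ in $\varphi^{load}$; everything else is routine expansion. A secondary point worth stating carefully is that since $\Phi$ is a function on a finite strategy space that strictly decreases along any improving deviation, the best-response dynamics of Algorithm~\ref{alg:nash equilibrium} must terminate at a pure Nash equilibrium, which is the practical payoff of the theorem.
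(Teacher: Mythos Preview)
Your proposal is correct and follows essentially the same approach as the paper: both verify the exact-potential identity $\Phi(\Lambda') - \Phi(\Lambda) = \varphi(a_i') - \varphi(a_i)$ by splitting into the load-balancing and edge-cutting terms, expanding the quadratic for the former and tracking which edges gain or lose cut status for the latter. Your write-up is in fact a bit more careful than the paper's about the before/after convention for $|a_i|$ versus $|a_i'|$, but the substance of the argument is identical.
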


\begin{proof}
    To prove a game is an exact potential game, we need to show that when the partition choice of cluster $c_i$ is changed, the value change of potential function (Equation~\ref{eq:potential function}) is the same as the value change of $c_i$'s individual cost (Equation~\ref{eq:individual cost}), according to Lemma 19.7 of \cite{NisaRougTardVazi07}. %So that the decrease in the value of individual cost is same as the decrease in the value of potential function.

    We define $a_{-i}$ as the strategies of all clusters except cluster $c_i$, under the strategy profile $\Lambda$, formally $a_{-i} = \Lambda - \{a_i\}$. Let $\Phi(a_i^{\prime}, a_{-i})$ be the potential value, when cluster $c_i$ changes its partition choice from $a_i$ to $a_i^{\prime}$, and other clusters remain unchanged. Thus, to prove the theorem, it is equivalent to show $\Phi(a_i^{\prime}, a_{-i}) - \Phi(a_i, a_{-i}) \equiv  \varphi(a_i^{\prime}, a_{-i}) - \varphi(a_{i}, a_{-i})$. According to Equation~\ref{eq:individual cost}, we have $\Delta\varphi = \varphi(a_i^{\prime}, a_{-i}) - \varphi(a_{i}, a_{-i}) = \lambda\frac{1}{k}|c_i|(|a^{\prime}_i| + |c_i| - |a_i|) + \frac{1}{2}(\Delta E)$ and $\Delta E = |e(c_i, a_i)| + |e(a_i, c_i)| - |e(c_i, a^{\prime}_i)|-|e(a^{\prime}_i, c_i)|$.

% \begin{equation}
%     \small
%     \begin{split}
%         & \Delta\varphi = \varphi(a_i^{\prime}, a_{-i}) - \varphi(a_{i}, a_{-i}) = \lambda\frac{1}{k}|c_i|(|a^{\prime}_i| + |c_i| - |a_i|) + \frac{1}{2}(\Delta E) \\
%         & \Delta E = |e(c_i, a_i)| + |e(a_i, c_i)| - |e(c_i, a^{\prime}_i)|-|e(a^{\prime}_i, c_i)|
%     \end{split}
% \end{equation}

    Similarly, according to Equation~\ref{eq:potential function}, we can obtain the potential function difference as: $\Delta\Phi=\Phi(a_i^{\prime}, a_{-i}) - \Phi(a_i, a_{-i}) = \Delta\Phi_{load}+\Delta\Phi_{cut}$. Hence, for the first part we have $\Delta \Phi_{load} = \lambda\frac{1}{2k}(2|a^{\prime}_i||c_i|+{|c_i|}^2-2|a_i||c_i|+|c_i|^2) = \lambda\frac{1}{k}|c_i|(|a^{\prime}_i| + |c_i| - |a_i|)$.
% \begin{equation}
% \small
%     \begin{split}
%         \Delta \Phi_{load} & = \lambda\frac{1}{2k}(2|a^{\prime}_i||c_i|+{|c_i|}^2-2|a_i||c_i|+|c_i|^2) \\
%         & = \lambda\frac{1}{k}|c_i|(|a^{\prime}_i| + |c_i| - |a_i|) \\
%     \end{split}
% \end{equation}
    And the second part can be computed as:
    \begin{equation}
    \footnotesize
        \begin{split}
            \Delta \Phi_{cut} &= \frac{1}{2}[|e(a^{\prime}_i, V\backslash a^{\prime}_i)|-|e(a^{\prime}_i,c_i)|+|e(c_i,V\backslash a^{\prime}_i)| +|e(a_i, V\backslash a_i)| \\
            & \quad + |e(a_i,c_i)| - |e(c_i, V\backslash a_i)|-|e(a^{\prime}_i, V\backslash a^{\prime}_i)| - |e(a_i, V\backslash a_i)|] \\
            & = \frac{1}{2}[|e(c_i,V\backslash a^{\prime}_i)|-|e(a^{\prime}_i,c_i)|+|e(a_i,c_i)| -|e(c_i, V\backslash a_i)|] \\
            & = \frac{1}{2}[|e(c_i, a_i)| + |e(a_i, c_i)| - |e(c_i, a^{\prime}_i)|-|e(a^{\prime}_i, c_i)|] = \frac{1}{2}(\Delta E)
        \end{split}
    \end{equation}
    Therefore, it can be concluded that $\Phi(a_i^{\prime}, a_{-i}) - \Phi(a_i, a_{-i}) \equiv  \varphi(a_i^{\prime}, a_{-i}) - \varphi(a_{i}, a_{-i})$. Thus, Theorem~\ref{th:potential game} is proved.

    % \begin{equation}
    % \small
    % \Phi(a_i^{\prime}, a_{-i}) - \Phi(a_i, a_{-i}) \equiv  \varphi(a_i^{\prime}, a_{-i}) - \varphi(a_{i}, a_{-i})
    % \end{equation}
\end{proof}

Next, we answer $3$ remained questions, a) how to choose the normalization factor $\lambda$; b) how fast the Nash equilibrium can be found; and c) how good is the derived solution.

% Next, we analyze the game process from following aspects: a) how to choose the normalization factor $\lambda$; b) how fast the Nash equilibrium can be found; c) how good is the resulting solution.

\subsection{Game Analysis}
\label{subsec:analysis}

\textbf{Normalization.} We show how to choose the normalization factor $\lambda$. In the game process, the factors of load balancing and edge-cutting have confounded effects over the partitioning optimization. However, the cost of load balancing is counted in millions, and the cost of edge-cutting is counted in thousands, which may not faithfully reflect the weights of the two factors in the total cost. We thus show a strategy for determining $\lambda$.

Without losing generality, we assume the two factors in Equation~\ref{eq:total cost} are of equal importance \cite{armenatzoglou2015real,hua2019quasi}, so that $\lambda\frac{1}{k}\sum_{i=1}^{k}|p_i|^2=\sum_{i=1}^{k}|e(p_i, V\backslash p_i)|$.
%Here, we let the load balance cost be equal to the edge-cut cost in Equation~\ref{eq:total cost}, formally, $\lambda\frac{1}{k}\sum_{i=1}^{k}|p_i|^2=\sum_{i=1}^{k}|e(p_i, V\backslash p_i)|$, therefore, we can have:
Therefore, we can have:
\begin{equation}
%\footnotesize
    \label{eq:beta}
    \begin{split}
        \lambda = \frac{k\sum_{i=1}^{k}|e(p_i, V\backslash p_i)|}{\sum_{i=1}^{k}|p_i|^2}
    \end{split}
\end{equation}

Then, we show the value range of $\lambda$, which is important in analyzing the quality of partitioning.
%may not be comparable. For large graphs, the load balance cost is counted by millions, while the cut edges cost is only counted by thousands. {\bf with the range, how to get the value of $\lambda$}

\begin{theorem}
    \label{th:beta range}
    The value range of $\lambda$ is  $[0, \frac{k^2\sum_{i=1}^{m}|e(c_i,V\backslash c_i)|}{(\sum_{i=1}^{m}|c_i|)^2}]$.
    %and the minimum of $\lambda$ is $0$.
\end{theorem}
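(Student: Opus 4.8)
The plan is to start from the closed form of the normalization factor in Equation~\ref{eq:beta}, namely $\lambda = \frac{k\sum_{i=1}^{k}|e(p_i, V\backslash p_i)|}{\sum_{i=1}^{k}|p_i|^2}$, and to bound the numerator from above and the denominator from below while the assignment of clusters to partitions ranges over all admissible strategy profiles. Throughout I would use the structural fact from Section~\ref{subsec:model} that each partition $p_i$ is exactly a union of clusters, so that $|p_i| = \sum_{c_j \in p_i}|c_j|$ (consistent with the derivation of Equation~\ref{eq:individual cost}) and in particular $\sum_{i=1}^{k}|p_i| = \sum_{i=1}^{m}|c_i|$.

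For the left endpoint, every quantity occurring in $\lambda$ is non-negative: the constant $k>0$, each cut count $|e(p_i,V\backslash p_i)|\ge 0$, and each $|p_i|^2\ge 0$, with the denominator strictly positive whenever $E\neq\emptyset$. Hence $\lambda\ge 0$, and this bound is tight: the value $0$ is approached by the degenerate strategy profile that places all clusters in a single partition (or, more generally, by any profile with no edge crossing a partition boundary), so $0$ is the correct infimum.

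For the right endpoint I would argue in two steps. First, since a partition is a union of clusters, any directed edge crossing from $p_i$ to $V\backslash p_i$ also crosses from its source cluster to a distinct cluster; counting directed edges once from the source side on both sides gives $\sum_{i=1}^{k}|e(p_i, V\backslash p_i)| \le \sum_{i=1}^{m}|e(c_i, V\backslash c_i)|$. Second, applying the Cauchy--Schwarz (equivalently, power-mean) inequality to the numbers $|p_1|,\dots,|p_k|$ yields $\sum_{i=1}^{k}|p_i|^2 \ge \frac{1}{k}(\sum_{i=1}^{k}|p_i|)^2 = \frac{1}{k}(\sum_{i=1}^{m}|c_i|)^2$. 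Substituting both into Equation~\ref{eq:beta} gives $\lambda \le \frac{k\sum_{i=1}^{m}|e(c_i, V\backslash c_i)|}{\frac{1}{k}(\sum_{i=1}^{m}|c_i|)^2} = \frac{k^2\sum_{i=1}^{m}|e(c_i, V\backslash c_i)|}{(\sum_{i=1}^{m}|c_i|)^2}$, which is exactly the claimed upper bound.

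The main obstacle I anticipate is not depth but bookkeeping: one must pin down precisely that $|p_i|$ counts only the intra-cluster edges assigned to $p_i$ — so that $\sum_{i}|p_i| = \sum_{i}|c_i|$ holds exactly — and that the directed cut-edge sums on the cluster side and the partition side are counted under the same convention, so that $\sum_{i=1}^{k}|e(p_i,V\backslash p_i)| \le \sum_{i=1}^{m}|e(c_i,V\backslash c_i)|$ is literally true rather than off by the inter-partition multiplicity. Once these conventions are fixed, each of the two bounds is a one-line inequality, their combination is immediate, and the endpoints are the extreme attainable/limiting values, so the stated interval is tight.
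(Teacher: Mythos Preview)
Your proposal is correct and follows essentially the same approach as the paper: both bound the numerator of Equation~\ref{eq:beta} above by $\sum_{i=1}^{m}|e(c_i,V\backslash c_i)|$ and the denominator below by $\frac{1}{k}(\sum_{i=1}^{m}|c_i|)^2$, obtaining the upper endpoint, and both observe that $\lambda=0$ is attained when all clusters sit in one partition. Your version is simply more explicit about the justifications (invoking Cauchy--Schwarz for the denominator and the cluster-to-partition cut inclusion for the numerator), whereas the paper states the extreme configurations and their values directly.
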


\begin{proof}
    %Now, we consider the value range for metric of load balance $\sum_{i=1}^{k}|p_i|^2$ and metric of cut edges $\sum_{i=1}^{k}|e(p_i,V\backslash p_i)|$.
    When all clusters are assigned to the same partition, the load balancing factor reaches the maximum value $(\sum_{i=1}^{m}|c_i|)^2$, while edge-cutting factor gets the minimum value $0$. Conversely, when all clusters are evenly separated to different partitions, the load balancing factor reaches the minimum value $\frac{(\sum_{i=1}^{m}|c_i|)^2}{k}$, while the edge-cutting factor gets the maximum value $\sum_{i=1}^{m}|e(c_i, V\backslash c_i)|$.
    According to the valid variation range of the two factors, we can get $0 \leq \lambda \leq \frac{k^2\sum_{i=1}^{m}|e(c_i,V\backslash c_i)|}{(\sum_{i=1}^{m}|c_i|)^2}$. Therefore, the theorem is proved.
    % \begin{equation}
    % \small
    %     \label{eq:beta range}
    %     0 \leq \lambda \leq \frac{k^2\sum_{i=1}^{m}|e(c_i,V\backslash c_i)|}{(\sum_{i=1}^{m}|c_i|)^2}
    % \end{equation}
    % Therefore, the theorem is proved.
\end{proof}

{
    % \color{blue}

    We next analyze the round complexity and the quality of the game process. The round complexity of a game refers to the number of iterations, i.e., rounds of Algorithm~\ref{alg:nash equilibrium}, taken in finding the Nash equilibrium. A smaller number of rounds corresponds to a faster convergence to the equilibrium, and better efficiency. For the partitioning quality, we use $PoA$ and $PoS$ to measure the quantification of the suboptimality of finding an equilibrium in approaching the optimal solution. $PoA$ $(PoS)$ denotes the ratio of the worst (best) local optimal solution find by Algorithm~\ref{alg:nash equilibrium} to the global optimal solution of Equation~\ref{eq:total cost}. It indicates the upper and lower bound of the Nash equilibrium, reflecting the stability of the game process.
}

\textbf{Round Complexity.} We next study the round complexity of Algorithm~\ref{alg:nash equilibrium}.

\begin{theorem}
    \label{th:round_complexity}
    The number of rounds of cluster partitioning game is bounded by $\sum_{i=1}^{m}|e(c_i, V\backslash c_i)|$.
\end{theorem}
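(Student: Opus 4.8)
The plan is to use the exact potential function $\Phi(\Lambda)$ from Definition~\ref{eq:epf} as a progress measure, exploiting the fact (established in Theorem~\ref{th:potential game}) that whenever a cluster $c_i$ switches from $a_i$ to a strictly better strategy $a_i'$, the potential drops by exactly the same amount as $c_i$'s individual cost, $\Delta\Phi = \varphi(a_i',a_{-i}) - \varphi(a_i,a_{-i}) < 0$. First I would argue that because the algorithm only performs a strategy update when it strictly decreases a player's cost, every update strictly decreases $\Phi$. Then I would show that each such update decreases $\Phi$ by at least some fixed positive quantity (ideally $\tfrac12$, or more conservatively a constant determined by the integrality of edge counts), so that the number of improving moves — and hence the number of rounds, since a round with no improving move is the terminating one — is bounded by the total initial potential, or more precisely by the gap between the initial and minimal potential values.

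The key steps, in order, are: (1) recall $\Delta\Phi = \Delta\varphi$ for a unilateral improving deviation, so $\Phi$ is strictly monotonically decreasing along the run; (2) bound the load-balancing contribution — observe that since all relevant quantities $|p_i|$, $|c_i|$, $|e(\cdot,\cdot)|$ are non-negative integers, and the edge-cutting term of $\Phi$ is $\tfrac12\sum_i |e(p_i, V\setminus p_i)|$, any change in $\Phi$ caused by an improving move is at least $\tfrac12$ in magnitude (the load term contributes a non-negative integer multiple of $\tfrac{\lambda}{2k}$-scaled quantities, and combined with the half-integer cut term the total decrement is at least $\tfrac12$); (3) bound the initial value of the relevant part of $\Phi$ — the cut term starts at most at $\tfrac12\sum_{i=1}^m |e(c_i, V\setminus c_i)|$ since at worst every inter-cluster edge is also inter-partition, and the load term, while initially as large as $\Theta(|E|^2/k)$, is non-increasing in a way that can be absorbed; (4) conclude that the number of improving moves, and therefore the number of rounds of Algorithm~\ref{alg:nash equilibrium}, is at most $\sum_{i=1}^m |e(c_i, V\setminus c_i)|$.

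The main obstacle is step (3): the load-balancing term $\lambda\frac{1}{2k}\sum_i |p_i|^2$ can be enormous initially (up to $\lambda (\sum_i |c_i|)^2/(2k)$), so a naive "rounds $\le$ initial potential" argument gives a bound far weaker than $\sum_i |e(c_i, V\setminus c_i)|$. To get the claimed bound I expect one must argue more carefully — e.g., that the load term is itself monotonically controlled (using the range of $\lambda$ from Theorem~\ref{th:beta range}, which is calibrated precisely so the two terms are comparable, $\lambda\frac1k\sum|p_i|^2 \approx \sum|e(p_i,V\setminus p_i)|$), so that the effective potential driving the round count reduces to (a constant times) the cut term, which is bounded by $\sum_{i=1}^m |e(c_i, V\setminus c_i)|$. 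Equivalently, one shows that an improving move's decrement is dominated by its effect on the cut term in the regime fixed by the choice of $\lambda$. Pinning down this comparability cleanly — and handling the half-integer granularity so the per-move decrement exactly matches the normalization — is the crux; the rest is routine telescoping of $\Phi$.
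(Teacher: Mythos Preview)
Your proposal is correct and follows essentially the same route as the paper: use the exact potential $\Phi$ as a decreasing progress measure, argue a per-move decrement of at least a unit from the integrality of the edge-cut term, and bound the range of $\Phi$ by $\sum_{i=1}^{m}|e(c_i,V\setminus c_i)|$. The step you flag as ``the crux'' --- controlling the load-balancing term --- is dispatched in the paper exactly as you anticipate: they substitute the normalization choice of $\lambda$ from Equation~\ref{eq:beta} directly, which forces $\Phi^{load}=\tfrac{1}{2}\sum_{i=1}^{k}|e(p_i,V\setminus p_i)|=\Phi^{cut}$, so both halves of $\Phi$ are bounded by $\tfrac{1}{2}\sum_{i=1}^{m}|e(c_i,V\setminus c_i)|$ and the telescoping goes through without further work.
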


\begin{proof}
    We next study how to prove the Theorem~\ref{th:round_complexity}. Based on the Equations~\ref{eq:potential function}, \ref{eq:beta} and the Theorem~\ref{th:beta range}, we can have $0 \leq \Phi(a)^{load} = \frac{1}{2}\sum_{i=1}^{k}|e(p_i,V\backslash p_i)| \leq \frac{1}{2}\sum_{i=1}^{m}|e(c_i, V\backslash c_i)|$ and $0 \leq \Phi(a)^{cut} \leq \frac{1}{2}\sum_{i=1}^{m}|e(c_i, V\backslash c_i)|$. Hence, we can get the range of potential function $0 \leq \Phi(a) = \Phi(a)^{load} + \Phi(a)^{cut} <= \sum_{i=1}^{m}|e(c_i, V\backslash c_i)|$.

    % In order to estimate the round complexity, we need to scale up the potential function $\Phi(a)$, since the cost reduction in $\Phi(a)$ can be decimal \cite{armenatzoglou2015real,hua2019quasi}.
    Since $ \Phi(a)^{cut} = \frac{1}{2}\sum_{i=1}^{k}|e(p_i,V\backslash p_i)|$ is in the integer domain. It implies that if a cluster changes its current strategy in the game, the reduction of $\Phi(a)$ should be at least $1$. So, the number of rounds will be bounded by $\sum_{i=1}^{m}|e(c_i, V\backslash c_i)|$.
\end{proof}

\textbf{Partitioning Game Quality.} To theoretically bound the partitioning quality, we analyze the worst-case and best-case partitioning quality at the equilibrium, relative to the optimal performance, respectively.
In algorithmic game theories, the two are called {\it Price of Anarchy} (PoA) and {\it Price of Stability} (PoS), which are counterparts to the concept of approximation ratio in algorithm designs.

\begin{definition}{\textbf{(Price of Anarchy)}}
    PoA is the lowest ratio of Nash equilibrium achieved over the optimum value of the overall cost function. Generally, $PoA = \frac{\varphi(\Lambda)_{max}}{\varphi(\Lambda^{opt})}$, where ${\Lambda^{opt}}$ denotes the global optimal strategy minimizing function $\varphi(\Lambda)$.
\end{definition}

\begin{theorem}
    The $PoA$ of the game is bounded by $k + 1$.
\end{theorem}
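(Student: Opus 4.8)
The plan is to bound the numerator $\varphi(\Lambda^*)$ for the worst Nash equilibrium $\Lambda^*$ from above, bound the denominator $\varphi(\Lambda^{\mathrm{opt}})$ from below, and divide. First I would exploit the fact that the cluster structure is invariant under any cluster-to-partition assignment: since every partition is a union of whole clusters, the edge-cutting term of Equation~\ref{eq:total cost} satisfies $\sum_{i=1}^{k}|e(p_i,V\backslash p_i)| \le \sum_{i=1}^{m}|e(c_i,V\backslash c_i)|$ for every strategy profile, and in the load term the quantity $\sum_i|p_i| = \sum_i|c_i|$ is a constant. Applying the Cauchy--Schwarz inequality to this fixed sum gives
\[
\frac{\lambda}{k^2}\Big(\sum_{i=1}^{m}|c_i|\Big)^2 \;\le\; \frac{\lambda}{k}\sum_{i=1}^{k}|p_i|^2 \;\le\; \frac{\lambda}{k}\Big(\sum_{i=1}^{m}|c_i|\Big)^2,
\]
where the left bound is attained only by a perfectly balanced assignment and the right bound only when all clusters collapse into a single partition.

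Combining the two facts yields $\varphi(\Lambda^*) \le \frac{\lambda}{k}\big(\sum_i|c_i|\big)^2 + \sum_i|e(c_i,V\backslash c_i)|$ and $\varphi(\Lambda^{\mathrm{opt}}) \ge \frac{\lambda}{k^2}\big(\sum_i|c_i|\big)^2$, hence
\[
PoA \;=\; \frac{\varphi(\Lambda^*)}{\varphi(\Lambda^{\mathrm{opt}})}
\;\le\; k \;+\; \frac{k^2\sum_{i=1}^{m}|e(c_i,V\backslash c_i)|}{\lambda\big(\sum_{i=1}^{m}|c_i|\big)^2}.
\]
The concluding step is to invoke Theorem~\ref{th:beta range}: the normalization factor lies in $\big[0,\,k^2\sum_i|e(c_i,V\backslash c_i)|/(\sum_i|c_i|)^2\big]$, and the value prescribed by the equal-importance rule (Equation~\ref{eq:beta}) is exactly the right endpoint of this interval, so the residual fraction equals $1$ and $PoA \le k+1$ follows.

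The step I expect to be the main obstacle is that the crude estimate $\varphi(\Lambda^*)\le \frac{\lambda}{k}(\sum_i|c_i|)^2 + \sum_i|e(c_i,V\backslash c_i)|$ is tight only when the equilibrium is simultaneously maximally imbalanced and has all inter-cluster edges cut, a situation that cannot actually co-occur; to secure the bound uniformly over the whole admissible range of $\lambda$ rather than only at its designed value, I would instead sum the no-beneficial-deviation condition $\varphi(a_i^*,a_{-i}^*) \le \varphi(a_i^{\mathrm{opt}},a_{-i}^*)$ over all clusters and use $\sum_i\varphi(a_i)=\varphi(\Lambda)$ from Section~\ref{subsec:model}. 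The load part of the deviation cost, after regrouping $\sum_i|c_i|\,|p^*_{a_i^{\mathrm{opt}}}|$ by the optimal partition classes and applying Cauchy--Schwarz, contributes a cross term of the form $\sqrt{\varphi^{load}(\Lambda^*)\,\varphi^{load}(\Lambda^{\mathrm{opt}})}$ (the load component of $\varphi$), while the cut part of the deviation cost is again controlled by the invariant inter-cluster cut. This leaves a quadratic inequality in $\sqrt{PoA}$ whose root, together with the range of $\lambda$ from Theorem~\ref{th:beta range}, should collapse once more to $k+1$; carrying the constants through this refinement cleanly is the delicate part of the argument.
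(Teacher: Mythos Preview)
Your first two paragraphs reproduce the paper's proof almost exactly: the paper also bounds the numerator by taking the maximum of the load term (all clusters in one partition) and the maximum of the cut term (all inter-cluster edges cut), bounds the denominator by dropping the cut term and taking the balanced load minimum, substitutes the right endpoint of $\lambda$ from Theorem~\ref{th:beta range}, and divides to obtain $k+1$. The refinement you sketch in your final paragraph---summing the no-deviation inequalities to exploit the Nash property---is not needed and is not what the paper does; the paper is content with the crude profile-independent bound and simply fixes $\lambda$ at its maximum value (as stated in the experimental setup), so the residual fraction is exactly~$1$ by construction rather than by a smoothness argument.
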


\begin{proof}
    If the load balancing and edge-cutting factors get their maximum values, simultaneously, the upper bound of $\varphi(\Lambda)$ can be computed as:
    \begin{equation}
    \label{eq:phiub}
    %\footnotesize
        \begin{split}
            \varphi(\Lambda)
            % & = \lambda\frac{1}{k}\sum_{i=1}^{k}|p_i|^2 + \sum_{i=1}^{k}|e(p_i, V\backslash p_i)| \\
             & \leq \frac{k\sum_{i=1}^{m}|e(c_i,V\backslash c_i)|}{(\sum_{i=1}^{m}|c_i|)^2} \cdot \sum_{i=1}^{k}|p_i|^2 + \sum_{i=1}^{k}|e(p_i, V\backslash p_i)| \\
             & \leq (k+1)\sum_{i=1}^{m}|e(c_i,V\backslash c_i)| \\
        \end{split}
    \end{equation}
    %Next, let $a^{opt}$ be the globally optimal strategy that minimizes $\varphi(a)$.
    Similarly, if the two factors get their minimum values, simultaneously, the lower bound of $\varphi(\Lambda^{opt})$ can be computed as:
    \begin{equation}
    \label{eq:optlb}
    %\footnotesize
        \begin{split}
            \varphi(\Lambda^{opt})
            % & = \lambda\frac{1}{k}\sum_{i=1}^{k}|p_i|^2 + \sum_{i=1}^{k}|e(p_i, V\backslash p_i)| \\
            & \geq \frac{k\sum_{i=1}^{m}|e(c_i,V\backslash c_i)|}{(\sum_{i=1}^{m}|c_i|)^2} \cdot \frac{(\sum_{i=1}^{m}|c_i|)^2}{k} \\
            & = \sum_{i=1}^{m}|e(c_i, V\backslash c_i)|
        \end{split}
    \end{equation}
    PoA must be no larger than the quotient of upper bound of $\varphi$ (Equation~\ref{eq:phiub}) and lower bound of $\varphi(\Lambda^{opt})$ (Equation~\ref{eq:optlb}). Thus, we have $PoA = \frac{\varphi(\Lambda)_{max}}{\varphi(\Lambda^{opt})} \leq \frac{(k+1)\sum_{i=1}^{m}|e(c_i, V\backslash c_i)|}{\sum_{i=1}^{m}|e(c_i, V\backslash c_i)|}= k+1$.
    % \begin{equation}
    %     \small
    %     PoA = \frac{\varphi(\Lambda)_{max}}{\varphi(\Lambda^{opt})} \leq \frac{(k+1)\sum_{i=1}^{m}|e(c_i, V\backslash c_i)|}{\sum_{i=1}^{m}|e(c_i, V\backslash c_i)|}= k+1
    % \end{equation}
    Therefore, the theorem is proved.
\end{proof}

\begin{definition}{\textbf{(Price of Stability)}}
    Price of Stability(PoS) is the highest ratio of Nash equilibrium over the optimum value of overall cost function. Generally, $PoS = \frac{\varphi(\Lambda^{\prime})}{\varphi(\Lambda^{opt})}$, where $\Lambda^{\prime}$ is the best Nash equilibrium strategy that minimize the potential function $\Phi(\Lambda)$.
\end{definition}

\begin{theorem}
    The $PoA$ of the game is bounded by $2$.
\end{theorem}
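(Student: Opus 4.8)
The plan is to bound the worst equilibrium directly, since $PoA=\frac{\varphi(\Lambda)_{max}}{\varphi(\Lambda^{opt})}$ concerns the \emph{largest}-cost equilibrium rather than the potential-minimizing one, so a smoothness-style argument on the individual costs of Equation~\ref{eq:individual cost} is the right route. I would fix an arbitrary pure equilibrium $\Lambda=\{a_i\}$ and the social optimum $\Lambda^{opt}=\{a_i^{opt}\}$, and invoke the equilibrium condition once per cluster using the canonical deviation in which $c_i$ switches to its optimal partition $a_i^{opt}$ while everyone else stays put: no cluster can lower its cost, so $\varphi(a_i)\le\varphi(a_i^{opt},a_{-i})$ for every $i$. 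Summing over the $m$ clusters and using the additivity $\varphi(\Lambda)=\sum_{i}\varphi(a_i)$ (the identity derived from Equations~\ref{eq:individual cost} and~\ref{eq:total cost}) yields the master inequality $\varphi(\Lambda)\le\sum_{i}\varphi(a_i^{opt},a_{-i})$, which is the starting point of the bound.

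Next I would bound the right-hand side by a combination $\lambda_{s}\,\varphi(\Lambda^{opt})+\mu\,\varphi(\Lambda)$, treating the load and cut parts of Equation~\ref{eq:individual cost} separately. For the load term $\frac{\lambda}{k}|c_i||a_i|$ the game is a weighted load-balancing instance with linear latency: when $c_i$ relocates to $a_i^{opt}$ the load it experiences is at most $|c_i|$ plus the load already present there in $\Lambda$, so the cross terms can be absorbed into the squared-load costs of the two profiles via $ab\le\frac12 a^2+\frac12 b^2$. For the cut term $\frac12(|e(c_i,V\backslash a_i)|+|e(V\backslash a_i,c_i)|)$, relocating $c_i$ can only generate edges to clusters that are cut from it either in $\Lambda^{opt}$ or in $\Lambda$, and I would charge those to the edge-cutting cost of the respective profile. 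Combining the two estimates produces a smoothness pair $(\lambda_s,\mu)$, whence every equilibrium satisfies $\varphi(\Lambda)\le\frac{\lambda_s}{1-\mu}\,\varphi(\Lambda^{opt})$, i.e. $PoA\le\frac{\lambda_s}{1-\mu}$.

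The hard part will be driving $\frac{\lambda_s}{1-\mu}$ down to exactly $2$, since a black-box weighted linear-congestion estimate only delivers $\frac{3+\sqrt{5}}{2}$ for the load term in isolation. The leverage I would use is twofold: the normalization of $\lambda$ in Equation~\ref{eq:beta}, which equalizes the load- and cut-contributions so that neither part dominates the combination, and the symmetry established in Theorem~\ref{th:potential game}, where a unilateral move changes the global objective by exactly twice the mover's private cost, i.e. $\Phi(\Lambda)=\frac12\varphi(\Lambda)$ for the potential of Equation~\ref{eq:potential function}. This identity means each cluster's externality coincides with its own cost change, which I expect to collapse the crude load estimate into the balanced pair $(\lambda_s,\mu)=(1,\frac12)$; as an anchor it already gives $\varphi(\Lambda)=2\Phi(\Lambda)\le 2\Phi(\Lambda^{opt})\le 2\varphi(\Lambda^{opt})$ whenever $\Lambda$ minimizes $\Phi$, and the smoothness extension is precisely what carries the factor $2$ from the best equilibrium to \emph{every} equilibrium. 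I would close by checking the extremal configurations of Theorem~\ref{th:beta range} (all clusters on one partition versus an even spread) to confirm that the constant $2$ is not exceeded.
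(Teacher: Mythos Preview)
The paper's own argument is precisely the ``anchor'' you already wrote down: from Equations~\ref{eq:total cost} and~\ref{eq:potential function} one has $\Phi(\Lambda)\le\varphi(\Lambda)\le 2\Phi(\Lambda)$ for every profile, and since the best equilibrium $\Lambda'$ minimizes $\Phi$, $\varphi(\Lambda')\le 2\Phi(\Lambda')\le 2\Phi(\Lambda^{opt})\le 2\varphi(\Lambda^{opt})$. That is the entire proof. Note that this establishes $PoS\le 2$, not $PoA\le 2$: the statement sits immediately after the $PoS$ definition, the proof's last line literally concludes ``$PoS\le 2$'', and the paper has already bounded $PoA$ by $k+1$ in the preceding theorem. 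The word ``$PoA$'' in the statement is evidently a typo for ``$PoS$''.

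Your smoothness plan therefore attempts strictly more than the paper claims, and the crucial step---collapsing to the pair $(\lambda_s,\mu)=(1,\tfrac12)$---is not justified. The identity $\Phi=\tfrac12\varphi$ is a \emph{global} rescaling, not a per-player externality statement; it does not by itself yield $(1,\tfrac12)$-smoothness in the sense of Roughgarden. Concretely, the load term $\tfrac{\lambda}{k}|c_i||a_i|$ is a weighted linear congestion cost, for which the tight smoothness-based $PoA$ is $(3+\sqrt5)/2>2$, and your appeal to the normalization of $\lambda$ in Equation~\ref{eq:beta} does not remove this obstruction (that normalization depends on the realized profile and does not alter the game's strategic structure). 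So if you insist on reading the theorem as a $PoA$ bound, your outline has a genuine gap at the point where you hope the constant drops to $2$; if you read it as the intended $PoS$ bound, your anchor paragraph already \emph{is} the paper's proof and the surrounding smoothness machinery is unnecessary.
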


\begin{proof}
    Based on Equations~\ref{eq:individual cost} and \ref{eq:potential function}, for any partition strategy $a$, we can get that $\Phi(\Lambda) \leq \varphi(\Lambda) \leq 2\Phi(\Lambda)$.
    % \begin{equation}
    %     \small
    %     \Phi(\Lambda) \leq \varphi(\Lambda) \leq 2\Phi(\Lambda)
    % \end{equation}
    Since $\Phi(\Lambda^{\prime}) \leq \Phi(\Lambda^{opt})$, we can have $\varphi(\Lambda^{\prime}) \leq 2\Phi(\Lambda^{\prime}) \leq 2\Phi(\Lambda^{opt}) \leq 2\varphi(\Lambda^{opt})$.
    % \begin{equation}
    %     \small
    %     \varphi(\Lambda^{\prime}) \leq 2\Phi(\Lambda^{\prime}) \leq 2\Phi(\Lambda^{opt}) \leq 2\varphi(\Lambda^{opt})
    % \end{equation}
    Thus, we can conclude that $PoS = \frac{\varphi(\Lambda^{\prime})}{\varphi(\Lambda^{opt})} \leq 2$
    % \begin{equation}
    %     \small
    %     PoS = \frac{\varphi(\Lambda^{\prime})}{\varphi(\Lambda^{opt})} \leq 2
    % \end{equation}
    Therefore, the theorem is proved.
\end{proof}

So far, we analyze the feasibility of game-based cluster partitioning. Next, we proceed to discuss how to parallelize the cluster partitioning process.

\subsection{Parallelization}
\label{subsec:parallel}
%During the game process, clusters of a distributed node can be segmented into batches, which can further be parallelized by multi-threading (Figure~\ref{fig:algorithm} xxx).

The parallelization is enabled by clustering%Algorithm~\ref{alg:streaming-graph-cluster}
, which preserves the graph locality, so that two clusters tend to be adjacent in the graph structure, if their cluster IDs are close. For example, as shown in Figure~\ref{fig:cluster-example} (b), if neighbors of vertex $v$ arrive in the BFS order, vertex $v_1$ stands out to form a new cluster $c_1$ with its neighbors, so that $c_1$ and $c_0$ are structurally adjacent.
Based on the observation, we divide clusters within a distributed node into batches according to cluster IDs, for being further parallelized by multi-threading.
We recommend setting the batch size as a constant integer multiple of $k$, for dividing clusters equally into partitions. Otherwise, the solution space of the cluster partitioning problem can be enlarged, because one has to consider more possibilities for balancing, which increases the overhead of finding the Nash equilibrium.

Without parallelization, according to Algorithm~\ref{alg:nash equilibrium}, the time cost for each round of finding Nash equilibrium is $\Theta(m)$, and the round complexity is far less than $|E|$ (Theorem~\ref{th:round_complexity}), so the average time complexity of cluster partitioning step is $\Theta(|E|)$. Additionally, the space cost of this step is $O(m)$. With parallelization, the average time complexity can be approximated as $\Theta(|E_{b}^{avg}| * \frac{m}{batchsize \times threads\_num})$, where $|E_{b}^{avg}|$ is the average number of the intra-cluster edges in each batch. The space cost of this step is $O(bathsize \times threads\_num)$. Each thread holds clusters of the batch size, which is far less than $O(|V|)$.

\section{Experiments}
\label{sec:exp}

%Using real web graphs, we conducted three sets of experiments to evaluate our vertex-cut partitioning algorithm for a) partition quality, b) parallel scalability, and c) validation of streaming cluster and game theory.

%We present the experiment setup in Section~\ref{subsec:setup}. We report detailed experimental results in Section~\ref{subsec:result}.

\subsection{Experiment Setup}
\label{subsec:setup}

{
    % \color{blue}
    {\bfseries Datasets.} We used four real web graphs, UK, Arabic, WebBase, and IT, as listed in Table~\ref{ret:dataset}. Although the scope of the manuscript is on web graphs, we also test the partitioning quality on real social graph Twitter.
}

\begin{table}[h!]
    \centering
    \caption{Details of real-world Web graphs}
    \label{ret:dataset}
    \begin{tabular}	{c|c|c|c|c}
        \hline
        % \toprule
        Alias & Source & $|V|$ & $|E|$ & Size\\
        \hline
        \hline
        UK & uk-2002 \cite{BCSU3} & 19M & 0.3B & 4.7GB \\
        Arabic & arabic-2005 \cite{BCSU3} & 22M & 0.6B & 11GB \\
        WebBase & webbase-2001 \cite{BoVWFI,BRSLLP} & 118M & 1.0B & 17.2GB\\
        IT & it-2004 \cite{BoVWFI,BRSLLP} & 41M & 1.5B & 18.8GB\\
        Twitter & twitter \cite{BoVWFI,BRSLLP} & 41M & 1.4B & 18.3GB\\
        % \bottomrule
        \hline
    \end{tabular}
\end{table}

{
    % \color{blue}

    {\bf Competitors.} We consider $5$ competitors for evaluating the performance of vertex-cut partitioning, as shown in Table~\ref{tab:characteristics}, where HDRF is considered as the state-of-the-art vertex-cut streaming partitioning algorithm\footnote{The source codes of Hashing, DBH, HDRF, Greedy are provided by the first author of HDRF paper (https://github.com/fabiopetroni/VGP). The source code of Mint is obtained by the first author of Mint upon personal request.}. For a fair comparison, we choose default settings and best streaming orders for each of the competitors, a.k.a., random orders for HDRF, Greedy, Hash, and DBH, and BFS orders for Mint and CLUGP. The default parameters of CLUGP are set as follows. The maximum cluster volume $V_{max}$ is set as $\frac{|E|}{k}$ according to the suggestion of~\cite{hollocou2017streaming}, the imbalance factor $\tau = 1.0$, batch size is set as 6400 and the number of partitioning threads are set to 32. For cluster partitioning game, the normalization factor $\lambda$ is set to its maximum value.
}

%{\bfseries Partitioners.} We implemented the parallel version of CLUGP in Java and compared it with the following: 1) Hashing \cite{gonzalez2012powergraph}, a hash-based random edge-cut partitioner; 2) DBH \cite{xie2014distributed}, a degree-based hash partitioner for vertx-cut; 3) Hdrf \cite{petroni2015hdrf}. a state-of-the-art vertex-cut streaming partitioner with provable bound on vertex replication; 4) Greedy \cite{gonzalez2012powergraph}, a heuristic-based vertex-cut partitioner; 5) Mint \cite{hua2019quasi}, a game-theory based fast streaming algorithm.

%To get a fair comparison when evaluating the effective and efficiency, we assume that streaming edges arrive in random order for Hdrf, Greedy, Hash an DBG, and arrive in BFS order for Mint and CLUGP. Since heuristic-based streaming algorithm is hard for parallization, mover over, the stand alone parallel version of Hdrf proposed by author\footnote{https://github.com/fabiopetroni/VGP} perform poor on parallel(slower than single thread), we only use the single thread for Hdrf and Greedy.

%{\bfseries Parameters Settings.}
%We used the following settings throughout all experiments. For Hdrf \cite{petroni2015hdrf}, we set the parameters $\lambda = 1.1$ and $\epsilon = 1$ to get lower replicate factor. For Mint \cite{hua2019quasi}, we set the parameters $\alpha=0.5, \beta=\frac{k^3|V_b|}{2|E_b|^2}$ which are provided by author, where $k$ is the number of partitions, $|E_b|$ and $|V_b|$ are the number of edges and vertices in each batch respectivley.

{
    % \color{blue}
    {\bfseries Metrics.} We use the replication factor and relative load balance, which are commonly accepted to measure the partitioning quality. Details are shown in Section~\ref{subsec:partition-quality}.
}

{\bfseries Environment.} All algorithms are implemented in Java and run a PC with 20 x Intel(R) Xeon(R) CPU $E5$-$2698v4$ @ $2.20$GHz 40 cores and $256$GB main memory. To test the partitioning quality on real distributed environment, we use docker to simulate 32 computing nodes equipped with PowerGraph \cite{gonzalez2012powergraph}, and allocate one CPU for each computing node.

%\subsection{Results}
%\label{subsec:ret}

\subsection{Results}
\label{subsec:result}
%\label{subsec:quality}

\begin{figure*}[!ht]
    \centering
    \subfigure[RF vs. \#. of Partitions (UK-2002)] {\includegraphics[height=1in,width=1.7in]{ret/replication_uk-2002.csv}}
    \subfigure[RF vs. \#. of Partitions (Arabic-2005)] {\includegraphics[height=1in,width=1.7in]{ret/replication_arabic-2005.csv}}
    \subfigure[RF vs. \#. of Partitions (WebBase-2001)] {\includegraphics[height=1in,width=1.7in]{ret/replication_webbase-2001.csv}}
    \subfigure[RF vs. \#. of Partitions (IT-2004)] {\includegraphics[height=1in,width=1.7in]{ret/replication_it-2004.csv}}
    %\vspace{-5pt}
    \caption{Results on Quality (Replication Factor)}
    \label{ret:quality}
%\end{figure*}
%\begin{figure*}[ht]
        \centering
        \begin{minipage}{1\columnwidth}
            \subfigure[RF vs. \#. of Partitions] {\includegraphics[height=1in,width=0.49\linewidth]{ret/result/replication/bar chart/twitter/replication_twitter-2010.csv.pdf}}
            \subfigure[Runtime Cost] {\includegraphics[height=1in,width=0.49\linewidth]{ret/result/time/twitter/runtime.pdf}}
            % \subfigure[Space vs. \#. of Partitions] {\includegraphics[height=1in,width=0.32\linewidth]{ret/result/memory/twitter/memory_twitter.csv.pdf}}
            %\vspace{-5pt}
            \caption{Results on Twitter}
            \label{ret:twitter}
        \end{minipage}
        \begin{minipage}{0.49\columnwidth}
            {\includegraphics[width=1\columnwidth]{ret/result/rebuttle/sample/sample.csv.pdf}}
            %\vspace{-5pt}
            \caption{Results on Sample Graph}
            \label{ret:sample-graph}
        \end{minipage}
        \centering
        \begin{minipage}{0.49\columnwidth}
            {\includegraphics[width=1\columnwidth]{ret/memory_it-2004.csv}}
            %\vspace{-5pt}
            \caption{Space vs. \#. of Partitions (IT-2004)}
            \label{ret:space}
        \end{minipage}
    \end{figure*}
% \vspace{-5pt}

{\bfseries Replication factor.}
We show the results of quality on $4$ real datasets in Figure~\ref{ret:quality}.
% The replicator factors of all methods increase w.r.t. the number of partitions.
%It shows that heuristic-based methods are better than hashing-based methods, since xxxx.
In all testings, CLUGP outperforms its competitors and the trend of CLUGP is relatively stable.
For example, in Figure~\ref{ret:quality} (b), by increasing the number of partitions from $4$ to $256$, the replication factor of CLUGP increases only about $1.5$ times, while Hashing increases about $10$ times. When the number of partitions equals $256$, the replication factor of CLUGP is only 1/2 of that of HDRF, the best partitioning baseline. More, the second and third best competitors, Greedy and Mint, fall far behind CLUGP in terms of scalability, as it will be shown later.

The lowest replication factors of CLUGP shows the effectiveness of our proposal.
%Some key points are as follows.
The reasons are threefold.
1) For the stream clustering step, the splitting operation helps in reducing the replication factor.
2) For the cluster partitioning step, the cost function is designed to minimize edge-cutting and control the replication factor.
3) For the partition transformation step, the transformation fine-tunes the cluster partitioning result.
%From cluster to vertex, based on the cluster result, partition transformation would then tend to divide high degree vetexes. This directly reduce our goal.
%3)The cost function we design in the cluster partitioning game minimizes the edge-cut, which helps control the replication factor.

{

    % \color{blue}
    We also present the results on social graphs, e.g., Twitter, in Figure~\ref{ret:twitter}. It shows that the replication factor of CLUGP is slightly higher than that of HDRF. But the total task runtime cost, including graph partitioning time and distributed algorithm (e.g, pagerank) execution time, of CLUGP is much lower, because of the partitioning efficiency of CLUGP dominates that of HDRF. We would like to point out that our framework is targeted on web graphs, instead of social graphs.
    % {\color{teal}\st{The modularity of social graphs is often much lower than web graphs, as shown in} Table~\ref{tab:modularity-social}\st{, making the quality of streaming clustering not as good as web graphs.}}

    %The first step of the framework is on clustering which is modularity-based and web graphs often have high modularity values, as shown in Table~\ref{tab:modularity}, showing a good match between data and methodologies. The social graphs, however, do not have high modularity of values, as shown in Table~\ref{tab:modularity-social}.

    We test the performance of CLUGP w.r.t. varied graph sizes. We randomly sample UK-2002 to create a series of graph datasets.
    Figure~\ref{ret:sample-graph} shows that CLUGP has the best partitioning quality. By varying the graph size from $10K$ to $60M$, the replication factor of CLUGP increases only 20\%, while HDRF increases about 80\%.

% \vspace{0pt}
%     \begin{table}[h!]
%         %\small
%         \centering
%         \caption{\color{teal}\st{Modularity of Social Graphs}}
%         \vspace{-5pt}
%         \label{tab:modularity-social}
%         \begin{tabular}{c|c|c}
%             \hline
%             Algorithms & Datasets & Modularity\\
%             \hline
%             \multirow{1}{*}{Improved MOH  \cite{blondel2008fast, panyala2017approximate}} & twitter-2010 &0.47\\
%             \hline
%             \multirow{3}{*}{GPDGP  \cite{onizuka2017graph}} & soc-Pokec & 0.63\\
%             \cline{2-3} & soc-LiveJournal1 & 0.72\\
%             \cline{2-3} & wiki-Talk & 0.56\\
%             \hline
%         \end{tabular}
%     \end{table}
% \vspace{0pt}
}

{
    % \color{blue}
    {\bfseries Load balance.}
    As for the relative load balance, all algorithms achieve $1.0$. We also analyze the influence of relative load balance on replication factor in Figure~\ref{ret:Analysis} (a). It shows the replication factor slightly decreases as the increase of relative load balance.
    In all testings, the quality of CLUGP is stable.
    %Overall, the replication factor of CLUGP is not sensitive to the relative balance. This reality reflects the robustness of the CLUGP.
    % As for the relative load balance, all algorithms achieve the best 1.0, expect for CLUGP, which achieves 1.1 in some large partition numbers. We owe this sightly higher load balance to the iterative algorithms adopted in CLUGP.
    % We argue that 1.1 is adequate in view of the performance gain in Figures~\ref{ret:time}, \ref{ret:pagerank}, and \ref{ret:cc}, because balance factor is a relatively minor factor for iterative graph algorithms \cite{abbas2018streaming}.
}
\begin{figure}[ht]
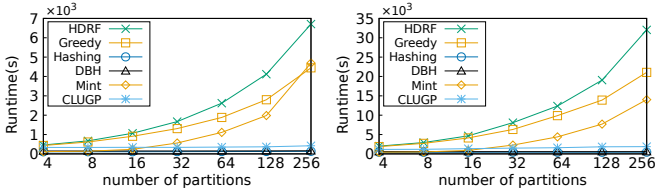

    \centering
    \begin{minipage}{1.0\columnwidth}
        \subfigure[Time vs. \#. of Partitions (UK-2002)] {\includegraphics[width=0.49\columnwidth]{ret/runtime_uk-2002.csv}}
        \subfigure[Time vs. \#. of Partitions (IT-2004)] {\includegraphics[width=0.49\columnwidth]{ret/runtime_it-2004.csv}}
        %\vspace{-5pt}
        \caption{Scalability in Terms of Time}
        %\vspace{-5pt}
        \label{ret:time}
    \end{minipage}
    %\vspace{-5pt}
\end{figure}
%\vspace{-5pt}

% \begin{figure}[ht]
%     \centering
%     \begin{minipage}{0.49\columnwidth}
%         {\includegraphics[width=1\columnwidth]{ret/memory_it-2004.csv}}
%         \caption{Space vs. \#. of Partitions (IT-2004)}
%         \label{ret:space}
%     \end{minipage}
%     \begin{minipage}{0.49\columnwidth}
%         {\includegraphics[width=1\columnwidth]{ret/result/rebuttle/sample/sample.csv.pdf}}
%         \caption{Results on Sample Graph}
%         \label{ret:sample-graph}
%     \end{minipage}
% \end{figure}

% \begin{minipage}{0.5\columnwidth}
%     % \subfigure[Space vs. \#. of Partitions (UK-2002)] {\includegraphics[width=0.49\columnwidth]{ret/memory_uk-2002.csv}}
%     \subfigure[Space vs. \#. of Partitions (IT-2004)] {\includegraphics[width=1\columnwidth]{ret/memory_it-2004.csv}}
%     \caption{Scalability in Terms of Space}
%     \label{ret:space}
% \end{minipage}
%     \begin{minipage}{0.5\columnwidth}
%     \subfigure[RF vs. \#. of Partitions] {\includegraphics[width=1\columnwidth]{ret/result/rebuttle/sample/sample.csv.pdf}}
%     \caption{Results on Sample Graph}
%     \label{ret:sample-graph}
% \end{minipage}

\begin{figure*}[ht]
    \centering
    \begin{minipage}{1.5\columnwidth}
        \subfigure[Communication cost vs. Dataset] {\includegraphics[height=1in,width=0.33\linewidth]{ret/pagerank_communication.csv}}
        \subfigure[Runtime Cost vs. Dataset] {\includegraphics[height=1in,width=0.33\linewidth]{ret/result/algorithm/pagerank/comp+comm/runtime.pdf}}
        \subfigure[Runtime Cost vs. Latency] {\includegraphics[height=1in,width=0.32\linewidth]{ret/result/rebuttle/latency/new/result.pdf}}
        \caption{Results on PageRank (PowerGraph)}
        \label{ret:pagerank}
    \end{minipage}
    \begin{minipage}{0.5\columnwidth}
        \includegraphics[width=1\columnwidth]{ret/result/clugp_without_game_and split/cws_it-2004.csv.pdf}
        \caption{Ablation Study}
        \label{ret:ablation}
    \end{minipage}
%\end{figure*}
%\begin{figure*}
    \centering
    \begin{minipage}{1.0\columnwidth}
        \subfigure[Runtime Cost vs. Algorithms] {\includegraphics[width=0.49\columnwidth]{ret/IO_time.csv.pdf}}
        \subfigure[Effect of Batch Size] {\includegraphics[width=0.5\columnwidth]{ret/result/rebuttle/batchsize/batchsize.csv.pdf}}
        \caption{Parallelization}
        \label{ret:parallel}
    \end{minipage}
    \centering
    \begin{minipage}{1.0\columnwidth}
        \subfigure[Effect of imbalance factor] {\includegraphics[width=0.5\columnwidth]{ret/result/rebuttle/imbalance factor/imbalance_factor_32_partition_.csv.pdf}}
        \subfigure[Effect of relative weight] {\includegraphics[width=0.49\columnwidth]{ret/result/rebuttle/relative weight/relative_weight.csv.pdf}}
        \caption{Analysis}
        \label{ret:Analysis}
    \end{minipage}
\end{figure*}
% \vspace{-5pt}

{\bfseries Space Overhead.}
We measure CLUGP's scalability in terms of space cost against other methods, as shown in Figure~\ref{ret:space}.
% It can be observed that the space overhead increases as the partition number increases, for all competitors.
%All these figures depict the rising tendensy of the memory cost w.r.t. to the number of partitions.
Heuristic-based methods occupy the biggest amount of space, which is about $8$ to $10$ times higher than CLUGP.
Because heuristic-based methods need to maintain the information from all partitions for the optimization purpose.
% Thus, the corresponding space overhead is big.
%keep all the information of each partition before a processing edge, which therefore gains a lot space usage.
By contrast, hashing-based methods take the minimum amount of space. Especially, Hashing takes $0$ space cost, because it merely needs a hash function for making the partition decision. %For it just need a hash function to make every decision.
CLUGP takes larger space cost than Mint,
%is generally about $2$ to $10$ times higher than Mint. The reason is that
since the space complexity of Mint is $O(batch\_size * number\_of\_threads)$, and CLUGP is $O(2|V|)$, as is mentioned in Section~\ref{sec:framework}, Section~\ref{sec:cluster} and Section~\ref{sec:clusp algorithm}. %Though CLUGP take a little more memory, is it worthy of the reduction in replication factor and cost in running iterative algorithm.
Although CLUGP takes bigger space than hashing-based methods and Mint, we argue that the space of several gigabytes are totally affordable for a cluster of hundreds of computing nodes. It is also worthy of the cost for the gaining in replication factor (Figure~\ref{ret:quality}), partitioning efficiency (Figure~\ref{ret:time}), and the computing efficiency (Figure~\ref{ret:pagerank}).

{\bfseries Runtime Scalability.}
We compare the scalability of all methods in terms of runtime cost, in Figure~\ref{ret:time} (a-b).
% It shows that the time costs of all methods increase w.r.t. the number of partitions.
The time costs of heuristic-based methods and Mint increase significantly as the increase of the number of partitions.
In particular, when the number of partitions equals $256$ (Figure~\ref{ret:time} (b)), HDRF takes about $35,000$ seconds for fulfilling the task of graph partitioning.
By contrast, CLUGP and hashing-based methods are not sensitive to the number of partitions. For example, when the number of partitions increase from $4$ to $256$, the runtime cost increases only from $1,162$ to $1,869$ seconds.
% The trends of CLUGP and hashing-based methods are similar.
In all testing, the runtime cost of CLUGP is about $2$ to $3$ times of that of hashing-based methods. We argue that the cost of CLUGP is worthwhile, since the small amount of runtime cost brings in the great benefits of partitioning quality (up to $10\times$ decrease of replication factor in Figure~\ref{ret:quality} (d)).
%We get a better scalability in terms of runtime because CLUGP is an {\bf I/O bounded algorithm, which means that the runtime is almost independent of the number of partitions.} Compared with Mint which adopts edge-partitioning strategy, we achieve a lower runtime by applying cluster partitioning game, indicating that the total game rounds of CLUGP is lower than Mint.

%\subsection
{\bf Performance on Real Systems.}
We examine the performance of partitioning algorithms on real distributed graph systems, PowerGraph. We report the computation and communication cost on pagerank, in Figure~\ref{ret:pagerank}.
{
    % \color{blue} (R4.O4)
    In all testings, CLUGP has the lowest computing time and communication time. The excellent performance is due to the high partitioning quality of CLUGP, including load balancing and low replication factor.
}
In general, hashing-based method perform the worst, and the performance gap is increasing w.r.t. the data volumes.
Heuristic-based methods and Mint are close, but still about 50\% to 100\% higher than CLUGP. In particular, on IT, CLUGP takes about 40\% of communication cost (Figure~\ref{ret:pagerank} (a)), and about 50\% computation cost (Figure~\ref{ret:pagerank} (b)), of the second best method, Greedy.
% Notice that the improvement will be much amplified, if the system is deployed in a real distributed environment rather than a simulated distributed environment, where the communication latency becomes the dominant factor to the overall performance.
{
    % \color{blue}
    To simulate real networking latency, we use PUMBA\footnote{https://github.com/alexei-led/pumba} to vary the RTT from $10$ms to $100$ms. The running time of pagerank under different network latency is shown in Figure~\ref{ret:pagerank} (c). In all testings, CLUGP is the most efficient and the stablest method.
    %In the testing, hashing-based methods is most sensitive to network latency, since hashing-based methods have high replication factor. As the network latency changes, the communication time of CLUGP increases least.
}
% In summary, the results on real systems are consistent with our observation on the quality and scalability testings.

%\subsection{Analysis}

%\begin{figure*}[!ht]
%    \centering
%    \subfigure[Speedup vs. \#. Number of threads] {\includegraphics[height=1in,width=1.7in]{ret/thread_speedup.csv}}
%%    \subfigure[RF vs. \#. Batchsize] {\includegraphics[height=1in,width=1.7in]{ret/batchsize_RF.csv}}
%%    \subfigure[Runing time vs. \#. Batchsize] {\includegraphics[height=1in,width=1.7in]{ret/batchsize_runtime.csv}}
%%    \caption{Scalability in Terms of Threads and Batchsize}
%%    \label{Speedup and Batchsize}
%\end{figure*}

%\begin{figure}[ht]
%    \centering
%    \subfigure[CLUGP vs. CLUGP-S] {\includegraphics[height=1in,width=0.45\linewidth]{ret/cws_it-2004.csv}}
%    \subfigure[CLUGP vs. CLUGP-G] {\includegraphics[height=1in,width=0.45\linewidth]{ret/cwg_it-2004.csv}}
%    \caption{Validation in Terms of Split and Game theory}
%    \label{ret:validation}
%\end{figure}
%In the subsection, we aim to validate the correctness and effectiveness of the Streaming Cluster algorithm and Game Theory of CLUGP.
% \vspace{-11pt}
{\bf Ablation Study.}
    %{\bfseries CLUGP without Split.}
    Splitting operation is the core part of stream clustering. Therefore, we compare the results with and without splitting operation, denoted as CLUGP and CLUGP-S, respectively. The experiment is done on IT by varying the number of partitions from $4$ to $256$. As shown in Figure~\ref{ret:ablation},
    RF of CLUGP is lower than CLUGP-S, in all cases.
    The trend of RF of CLUGP is relatively stable, while the RF of CLUGP-S increases sharply.
    So, we conclude that the splitting operation significantly improves the partitioning quality.
Also, we compare the results with and without game theory-based cluster partitioning, denoted as CLUGP and CLUGP-G, respectively, to show its effectiveness.
%Then, we examine the effectiveness of game theory-based cluster partitioning.
CLUGP-G greedily assigns a big-sized cluster to a small-sized partition. %sorting the volume of clusters and assign the largest cluster into a partition with minimum volume.
%Similarly, we implement CLUGP-G in which we replace the Game Theory part with a basic greedy algorithm and the other remain the same as CLUGP. The greedy algorithms we design will sort the volume of clusters and assign the largest cluster into a partition with minimum volume.
%The experiment settings are the same as CLUGP-S.
The result is depicted in Figure~\ref{ret:ablation}. The replication factor of CLUGP is about 60-70\% lower than CLUGP-S, demonstrating the effectiveness of game theory-based cluster partitioning.
%From the picture, we learn that the RF of CLUGP is smaller than CLUGP-S and the difference of them is around 1 in all cases. Thus, it is ubvious that Game Therory is able to reduce replication factor and improve the quality of partition.

{\bfseries Parallelization.}
%We evaluate the performance of the parallel mechanism in Figure~\ref{ret:parallel} (a).
We evaluate the parallelization performance by varying the number of threads, in Figure~\ref{ret:parallel} (a).
Notice that the total runtime cost consists of I/O cost and computation cost, whereas the latter dominates the total cost. Furthermore, since streaming cluster (step 1) and partitioning transformation (step 3) are all constant time complexity, the cluster partitioning game (step 2) almost occupies all the computation time.
Compared to one-pass streaming partitioning algorithms, e.g., HDRF, Greedy, and Mint (with 32 threads), the total runtime cost of CLUGP is much less. In particular, the runtime cost of CLUGP is about 60\% less than that of the second best competitor, Mint, although the I/O cost of our three-pass streaming partitioning algorithm is three times of that of one-pass competitors.
Also, it shows that, when the number of threads is increased from $8$ to $32$, the computation cost of CLUGP decreases from $1091$ to $429$ seconds, demonstrating good acceleration ratio of our parallelization mechanism.
In particular, the runtime cost of CLUGP (with only $8$ threads) is about 45\% lower than that of Mint (with $32$ threads).
%To figure out speedup of CLUGP, we measure the IO time and partition time of CLUGP versus the number of threads. As can be seen from the picture, when the number of threads range from $8$ to $32$, partition time decreases from $1091$seconds to $429$seconds while IO time always cost about $1150$seconds, which proves CLUGP can speed up with multi-thread implementation and demonstrate the good scalability achieved by our proposed parallel mechanism.
%We also compare the running time of CLUGP with HDRF, Greedy and Mint. Through CLUGP is a three pass algorithm whose IO time is three times as other algorithms, partition time of CLUGP has a huge reduction compared to others. For instance, partition time of CLUGP with $32$ threads is almost one-twentieth  of HDRF and one-third of Mint. On the other hand, this figure also indicates that CLUGP is an IO-bound algorithm, whose running time is mainly restricted by IO process.
Besides, we test the impact of batch size. In Figure~\ref{ret:parallel} (b), it shows that the runtime cost is insensitive to the batch size. With the increasing of batch size, the running time of CLUGP increases slightly.
% In fact, from Figure~\ref{ret:ablation}, we know that the time cost of cluster partitioning(step 2) dominates the total time cost.
When increasing the batch size, although time cost of cluster partitioning game within a batch will be increased, the number of partitioning tasks decreased.

{
    % \color{blue}
    {\bf Relative Weight.} Similar to previous works, we treat the two partitioning metrics in Section~\ref{subsec:partition-quality} as equally important and set the relative weight of Equation~\ref{eq:cluster-partition-objfunc} to $0.5$. We next study the influence of relative weight on partitioning quality, the result is shown in Figure~\ref{ret:Analysis} (b). We can have two observations: 1) in all testings, the replication factor of CLUGP is lower than its competitors; %This is consistent with the ablation study in Figure~\ref{ret:ablation} (a), which shows that the partitioning quality after the first step is already high;
    2) the curve of replication factor of CLUGP is U-shaped with a wide and smooth valley. The replication factor is high for two extremes. When the relative weight equals $0.1$, the optimization target is mostly on the replication factor, so that clusters are mostly put to very few partitions, which is almost equivalent to skipping the game process, resulting in a high replication factor. When the relative weight equals $0.9$, the optimization target is mostly on the load balance, so clusters tend to be sent evenly to the set of partitions. For other valued weights (i.e., the relative weight is in $[0.3, 0.7]$), the variation of replication factor is mostly within $10\%$. We can conclude that the relative weights do not have significant effect on partitioning quality, except extreme cases.
} 

\section{Related Work}
\label{sec:related}
{
    % \color{blue}
    There exists many algorithms for edge-cut and vertex-cut partitioning. Edge-cut partitioning aims to assign vertices into different partitions, while minimizing edge-cutting. METIS~\cite{karypis1996parallel} is an offline algorithm that adopts multi-level heuristics achieving high partitioning quality for edge-cut partitioning. However, efficiency of offline partitioning is low.
%However, the offline algorithms require full knowledge of the graph, thus impractical for large-scale graph.
    Streaming partitioning is considered to be practical for large-scale graph processing~\cite{stanton2012streaming, tsourakakis2014fennel,gonzalez2012powergraph, petroni2015hdrf, xie2014distributed, hua2019quasi}.    LDG~\cite{stanton2012streaming} tends to assigning neighboring vertices into the same partition. FENNEL~\cite{tsourakakis2014fennel} is an edge-cut partitioning algorithm which places a new vertex to the partition holding the most neighboring vertices or holding the least non-neighboring vertices.
    The vertex-cut streaming partitioning is first proposed in~\cite{gonzalez2012powergraph}, and has been proved to be effective on power-law graphs.
    Greedy~\cite{gonzalez2012powergraph} is a heuristic-based partitioning strategy which aims to minimize vertex-cuts.
    HDRF~\cite{petroni2015hdrf} makes use of the skewed distribution of degrees, and cuts the high-degree vertices first to reduce replicas. Similarly, DBH~\cite{xie2014distributed} prioritizes the cutting of high-degree vertices, with hashing-based methods.
    Both Greedy and HDRF need to record the previous results, which are hard to be parallelized. Mint~\cite{hua2019quasi} is a parallel algorithm that achieves a good trade-off between scalability and partitioning quality. Different from previous works, we explore graph clustering for enhancing the partitioning quality, employ streaming techniques for improving the efficiency and break the ties of global structures for boosting system performance.
    %To further reduce the communication cost for low-degree vertices. PowerLyra~\cite{chen2019powerlyra} and IOGP~\cite{dai2017iogp} combine vertex-cut and edge-cut by cutting only high-degree vertices.

} 

\section{Conclusion}
\label{sec:con}

In this paper, we study the problem of edge partitioning for web graphs by proposing a novel restreaming architecture, called CLUGP.
%Particularly, we explore the properties of web graphs for improving the partitioning quality and scalability.
Of the architecture, our techniques can be pipelined as three steps, streaming clustering, cluster partitioning, and transformation.
Compared with state-of-the-art algorithms, CLUGP achieves the best partitioning quality.
Also, we investigate parallelization mechanism to enhance the partitioning scalability.
% Also, we investigate parallelization mechanism to enhance the partitioning efficiency and scalability.
The results on real datasets and distributed graph systems show that the scalability of CLUGP is significantly better than that of one-pass streaming partitioning methods.
% It shows that the runtime cost of CLUGP is significantly lower than that of one-pass streaming partitioning methods, especially when the number of partitions is big.

% \input{app}
\newpage

\bibliographystyle{IEEEtran}
\bibliography{sample}

\end{document}